\newcommand{\ruth}[1]{{#1}}
\newtheorem{rem}{Remark}[section]
 \newtheorem{prop}{Proposition}[section]
 \newtheorem{definition}[prop]{Definition}
\renewcommand{\tilde}{\widetilde}
\renewcommand{\hat}{\widehat}
\newcommand{\bref}[1]{\textbf{\ref{#1}}}
\renewcommand{\mod}{\,\rm mod \,}
\newcommand{\p}[1]{|#1|}
\newcommand{\gh}[1]{\mathrm{gh}(#1)}
\newcommand{\dx}{\mathrm{d_X}}
\renewcommand{\d}{\partial}
\renewcommand{\dh}{\mathrm{d_h}}
\renewcommand{\geq}{\,{\geqslant}\,}
\renewcommand{\leq}{\,{\leqslant}\,}
\newcommand{\binner}[2]{%
  {\langle}\kern-4.15pt{\langle}#1{,}\,#2{\rangle}\kern-4.15pt{\rangle}}
\newcommand{\commut}[2]{[#1{,}\,#2]}
\newcommand{\ffrac}[2]{\raisebox{.5pt}%
  {\footnotesize$\displaystyle\frac{#1}{#2}$}\kern1pt}
\newcommand{\dl}[1]{\mathchoice{\ffrac{\d}{\d #1}}{\frac{\d}{\d #1}}{\ffrac{\d}{\d #1}}{\ffrac{\d}{\d #1}}}
\newcommand{\cC}{\mathcal{C}}
\newcommand{\fR}{\mathbb{R}}
 \def\cE{\mathcal{E}}
 \def\cI{\mathcal{I}}
 \def\cJ{\mathcal{J}}
\numberwithin{equation}{section}
\newcommand{\ROD}{}
\newcommand{\Co}{\mathrm{C}}
\newcommand{\We}{\mathrm{W}}
\DeclareMathOperator{\sign}{sign}
\newcommand{\dX}{\mathrm{d_X}}
\newcommand{\dJ}{\mathrm{d_{\cJ}}}
\title{Asymptotic symmetries of gravity in the gauge PDE approach\\[20pt]}
\author{Maxim Grigoriev$^{a,b}$ and Mikhail Markov$^{a,b,c}$}
\affil{$^{a}$ Lebedev Physical Institute,\protect\\
  Leninsky Ave. 53, 119991 Moscow, Russia \vspace{1em}
  \\
  $^{b}$ Institute for Theoretical and Mathematical Physics,\protect\\
  Lomonosov Moscow State University, 119991 Moscow, Russia \vspace{1em} \\ 
   $^{c}$ Steklov Mathematical Institute of Russian Academy of Sciences, \protect \\
  Gubkina str. 8, 119991 Moscow, Russia }  \vspace{1em}
\date{}  
\begin{document}

\maketitle
\begin{abstract}
We propose a framework to study local gauge theories on manifolds with boundaries and their asymptotic symmetries, which is based on representing them as so-called gauge PDEs. These objects extend the conventional BV-AKSZ sigma-models to the case of not necessarily topological and diffeomorphism invariant systems and are known to behave well when restricted to submanifolds and boundaries. We introduce the notion of gauge PDE with boundaries, which takes into account generic boundary conditions, and apply the framework to asymptotically flat gravity. In so doing, we start with a suitable representation of gravity as a gauge PDE with boundaries, which implements the Penrose description of asymptotically simple spacetimes. We then derive the minimal model of the gauge PDE induced on the boundary and observe that it provides the Cartan (frame-like) description of a (curved) conformal Carollian structure on the boundary. Furthermore, imposing a version of the familiar boundary conditions in the induced boundary gauge PDE, leads immediately to the conventional BMS algebra of asymptotic symmetries. Finally, we briefly sketch the construction for asymptotically (A)dS gravity.

\end{abstract}

\tableofcontents
\section{Introduction}

Asymptotic symmetries play a prominent role in modern QFT and gravity~\cite{Bondi:1962px, Sachs:1962zza, Penrose:1964ge, Brown:1986nw, Aharony:1999ti, Bagchi:2009my,Barnich:2010eb,Bagchi:2012yk,Strominger:2013jfa,Strominger:2014pwa,Prema:2021sjp}.  They originate from gauge transformations that preserve the boundary conditions imposed on fields but, at the same time, are not to be considered as genuine gauge transformations and hence define new physical symmetries. In particular, asymptotic symmetries critically depend on the choice of the boundary conditions for fields and gauge parameters.

Historically, the first and very influential example of asymptotic symmetries is the Bondi-Metzner-Sachs (BMS) symmetries of asymptotically flat gravity~\cite{Bondi:1962px, Sachs:1962zza}. In contrast to naive expectations, the simplest natural  choice of boundary conditions employed in~\cite{Bondi:1962px, Sachs:1962zza}  results in the enhancement of the Poincar\'e group by the so-called supertranslations, giving the infinite-dimensional symmetry group which is now known as the BMS group. The latter group, or its generalizations associated with different choices of boundary conditions, is now considered to be a proper symmetry group of the gravitational S-matrix and has been related~\cite{Strominger:2014pwa,He:2014laa} to the celebrated soft graviton theorem and the gravitational memory effect.

The proper geometrical setup for studying asymptotic symmetries of gravity was proposed by Roger Penrose, who introduced the notion of asymptotically simple spacetime~\cite{Penrose:1965am}.  More specifically, this is a spacetime $(\tilde M,\tilde g)$ that is diffeomorphic to the interior of the spacetime $(M,g)$ with boundary such that in the interior $g=\Omega^2 \tilde g$ for some smooth function $\Omega$ satisfying $\Omega>0$  and $\Omega|_\cJ=0,d\Omega|_\cJ\neq 0$, where $\cJ=\d M$ is the boundary. In other words, the idea is to realize the  boundary at infinity as the usual boundary of the auxiliary spacetime. More details can be found in e.g.~\cite{Penrose:1972ea,Geroch:1977big,Alessio:2017lps}.

Recent decades have shown an increasing interest in asymptotic symmetries, not only in the context of gravity but also in general gauge theories, including Yang-Mills, topological systems, and higher-spin gauge theories~\cite{Banados:1998gg,Henneaux:2010xg,Campoleoni:2010zq, Gaberdiel:2010pz,Afshar:2013vka,Strominger:2013lka,Campoleoni:2017mbt,Bekaert:2022ipg}. From this perspective, asymptotic symmetries are to be considered as a general feature of gauge theories on manifolds with (asymptotic) boundaries. This calls for a proper gauge-theoretical understanding of asymptotic symmetries. Various approaches are available in the literature. In particular, the first principle understanding of asymptotic symmetries is provided within the Hamiltonian approach~\cite{Brown:1986ed,Brown:1986nw}, see also~\cite{Banados:1998gg}, at the price of manifest covariance. A covariant generalization can be achieved with one or another version of the covariant phase space approach~\cite{Barnich:2001jy,Freidel:2021fxf,Freidel:2021cjp}, see also \cite{Compere:2018aar} and references therein.

A powerful and systematic framework for (quantum) gauge theories is provided by the Batalin-Vilkovisky (BV) formalism~\cite{Batalin:1981jr,Batalin:1983wj} or, more precisely, its modern enhancements, such as the jet-bundle BV approach to local gauge theories, see e.g.~\cite{Barnich:2000zw,Barnich:2001jy}. Of special attention in the present work is the so called BV-AKSZ framework~\cite{Alexandrov:1995kv}, initially proposed in the context of topological models. An interesting feature observed in~\cite{Grigoriev:1999qz,Barnich:2003wj,Barnich:2005ru,Cattaneo:2012qu,Grigoriev:2012xg} (see also \cite{Barnich:2010sw,Grigoriev:2010ic,Grigoriev:2010ic}) is that a BV-AKSZ system naturally induces a shifted AKSZ system on any space-time submanifold. For instance, an AKSZ version of the Hamiltonian BFV formulation is induced on a space-like submanifold of spacetime. By combining this observation with the construction of~\cite{Barnich:2010sw}, see also~\cite{Grigoriev:2010ic,Grigoriev:2012xg}, which allows one to reformulate a general local gauge system as an AKSZ-like model, one arrives at the framework to analyze boundary values and symmetries of generic local gauge systems. This approach has been successfully employed in~\cite{Bekaert:2012vt,Bekaert:2013zya,Chekmenev:2015kzf} in the study of boundary values of generic gauge fields on AdS space, see also~\cite{Vasiliev:2012vf} for a related approach, and in the reconstruction of bulk theories from the boundary data~\cite{Bekaert:2017bpy,Grigoriev:2018wrx}. Note that it does not employ the symplectic structure of BV formalism and is applicable to non-Lagrangian systems or systems whose Lagrangian is not specified.  Let us also mention the somewhat related approach of~\cite{Cattaneo:2012qu} to Lagrangian gauge systems on manifolds with boundaries, see also~\cite{Mnev:2019ejh,Rejzner:2020xid,Riello:2023ptb}.

In this work we develop an approach to gauge theories on manifolds with boundaries and their asymptotic symmetries, which is based on representing a given local gauge theory as a so-called \textit{gauge PDE}. Gauge PDE (gPDE)
is a generalization of the non-Lagrangian BV-AKSZ  formulation to the case of general gauge theories. Although the term gauge PDE and its geometrical definition was introduced only in~\cite{Grigoriev:2019ojp}, the framework was originally put forward already in~\cite{Barnich:2010sw}, see also~\cite{Barnich:2004cr}, under the name \textit{parent formulation}. Just like AKSZ systems, gPDEs behave well with respect to the restriction to space-time submanifolds and hence provide a natural framework to study gauge theories on manifolds with boundaries. More precisely, a gPDE is a bundle over spacetime and its pullback to a submanifold is again a gPDE, see e.g.~\cite{Grigoriev:2022zlq} and references therein. Gauge PDEs can be also considered as a BV-BRST extension and generalization of the so-called unfolded formalism developed in the context of higher spin theories~\cite{Vasiliev:1988xc,Vasiliev:2005zu}. 

We propose the notion of gPDEs with boundaries, which takes into account boundary conditions on fields and gauge parameters. More precisely, the boundary conditions are described by a sub-gPDE of the induced boundary gPDE which is, by definition, the initial gPDE pulled back to the boundary. In these terms asymptotic symmetries can be defined in a rather general and purely geometrical way, giving a systematic description of such systems and their symmetries in terms of differential graded geometry.

The approach is applied to asymptotically flat gravity and is shown to reproduce celebrated BMS symmetries once a gauge PDE version of the well-known boundary conditions is taken. A crucial point of the construction is the gauge theoretical implementation of the Penrose asymptotically simple spacetime. This is achieved by introducing a Weyl compensator field $\Omega$. In so doing the metric sector of the system can be considered as that describing conformal geometry, which allows to employ the equivalent reduction~\cite{Boulanger:2004eh} (see also \cite{Dneprov:2022jyn}) known in the context of conformal gravity. This later step leads to a remarkably simple boundary system which also resembles the conformal-geometry  approach~\cite{Herfray:2020rvq,Herfray:2021xyp,Herfray:2021qmp} to BMS symmetries and more general boundary calculus~\cite{Gover:2011rz,RodGover:2012ib}.

The paper is organized as follows: in Section \bref{sec:gpde-wb}, we briefly recall the gauge PDE approach to local gauge theories and propose its extension to theories on manifolds with boundaries and generally nontrivial boundary conditions. We then define asymptotic symmetries in this setup. In Section \bref{sec:GR-as-sim}, we present a reformulation of gPDE for general relativity in a form convenient for studying the asymptotic structure of this theory. The form is inspired by the Penrose's notion of an asymptotically simple spacetime. In Section~\bref{sec:bound+asympt} we derive the induced boundary system and analyze boundary conditions and asymptotic symmetries in the asymptotically flat case. This involves derivation of a concise minimal model of the induced boundary system, which makes manifest the Carrollian geometry structure. Finally, we sketch the construction in the the case of nonzero cosmological constant and present the respective minimal model.


\section{Gauge PDEs with boundaries and their symmetries}
\label{sec:gpde-wb}

\subsection{Gauge PDEs}

In the approach we employ in this work, local gauge theories, considered at the level of equations of motion, are encoded in the geometrical objects called gauge PDEs (or gPDE for short). The geometry underlying

gPDE can be seen as a generalization of the jet-bundle non-Lagrangian BV formalism\footnote{The version of classical BV formalism at the level of equations of motion was suggested in~\cite{Barnich:2004cr}, see also~\cite{Barnich:2009jy,Kaparulin:2011xy,Sharapov:2016sgx}.} \ruth{,} whose underlying geometrical structure is a jet-bundle of a graded fiber bundle equipped with the BV-BRST differential, see e.g.~\cite{Barnich:2000zw} for more details and references. 

We first need to briefly recall the necessary prerequisites. More detailed exposition can be found in~\cite{Grigoriev:2019ojp}.
\begin{definition}
    A $Q$-manifold (also called dg-manifold) is a $\mathbb{Z}$-graded supermanifold equipped with a homological vector field $Q$, i.e. a vector field of degree 1 satisfying $Q^2=\dfrac{1}{2}[Q,Q]=0$, $\gh Q=1$, $\vert Q\vert=1$, where $\gh\cdot$ denotes $\mathbb{Z}$-degree (often called ghost degree), and $\vert\cdot\vert$ denotes Grassmann parity.
\end{definition}
In this work we only deal with bosonic systems and hence one can simply assume $\vert f \vert=\gh{f}\mod{2}$  for any homogeneous functions, form, vector field, etc. Of course, the framework extends to systems with fermions in a standard way.

The standard simplest example of a $Q$ manifold is a shifted tangent bundle $T[1]X$ over a smooth manifold $X$. Its algebra of functions is just the algebra of differential forms on $X$. Under this identification the de Rham differential corresponds to a homological vector field $\dx$ on $T[1]X$. If $x^\mu$
are local coordinates on $X$ and $\theta^\mu$ the associated coordinates on the fibers, $\dx\equiv \theta^{\mu}\frac{\partial}{\partial x^{\mu}}$.

Let us also recall the definition of a $Q$-bundle, i.e. a  fiber bundle in the category of $Q$-manifolds:
\begin{definition}\cite{Kotov:2007nr}
    1. $Q$-bundle $\pi:(M,Q)\rightarrow(N,q)$ is a locally trivial bundle of graded manifolds $M$ and $N$ such that $\pi^{*}\circ q=Q\circ\pi^{*}$.

    2. A section $\sigma:N\rightarrow M$ is called a $Q$-section if $q\circ\sigma^{*}=\sigma^{*}\circ Q$.

    3. A $Q$-bundle $\pi:(M,Q)\rightarrow(N,q)$ is called locally trivial (as a $Q$-bundle) if it's locally isomorphic to a direct product of the base $(N,q)$ and the fiber  $(F,q^{\prime})$ in such a way that $Q=q+q^{\prime}$, i.e. $Q$ is locally the direct product $Q$-structure.
\end{definition}

There is a natural notion of equivalence for $Q$ manifolds, which, roughly speaking, corresponds to elimination of contractible pairs. From gauge-theoretical viewpoint, such contractible coordinates correspond to auxiliary fields, pure gauge variables and their associated ghosts/antifields. More precisely:
\begin{definition}
    1. A contractible $Q$-manifold is a $Q$-manifold of the form $(T[1]W,d_W)$, where $W$ is a graded vector space considered as a graded manifold, and $d_W$ is the de Rham differential on $T[1]W$.

    2. A $Q$-manifold $(N,q)$ is called an equivalent reduction of $(M,Q)$ if $(M,Q)$ is a locally trivial $Q$-bundle over $(N,q)$ admitting a global $Q$-section, and the fibers of this bundle are contractible $Q$-manifolds.
\end{definition}
Equivalent reduction generates the notion of equivalence. In particular, cohomology of natural complexes, e.g. $Q$-cohomology in differential forms on $E$, multivector fields, etc. on equivalent $Q$-manifolds are isomorphic. 

Locally, the statement that $(N,q)$ is an equivalent reduction of $(M,Q)$ implies that, seen as a $Q$-manifold, $(M,Q)$ is a direct product of $(N,q)$ and a contractible $Q$-manifold. A useful way~\cite{Barnich:2004cr} to identify an equivalent reduction in practice  is to find independent functions $w^a$ such that $Qw^a$ are independent functions as well. It then follows that at least locally a submanifold defined by $Qw^a=0$ and $w^a=0$ is an equivalent reduction of the initial $Q$-manifold. It follows one can find functions $\phi^i$ such that $w^a,v^a=Qw^a,\phi^i$ form a local coordinate system and  $Q\phi^i=Q^i(\phi)$.
In this case, $w^a$ and $v^a$ are standard contractible pairs known in the context of local BRST cohomology, see e.g.~\cite{Brandt:1996mh} and reference therein.

The above notions of equivalent reduction and of equivalence extend to $Q$-bundles over the same base: 
\begin{definition}
Let $(M^\prime,Q^\prime)$  and $(M,Q)$ are $Q$-bundles over the same base $(N,q)$. $(M,Q)$ is called an equivalent reduction of $(M^\prime,Q^\prime)$ if $(M,Q)$ is a locally trivial $Q$-bundle over $(M,Q)$  such that the projection and the local trivializations maps are compatible with projections to $(N,q)$ (i.e. it is a bundle in the category of bundles over $(N,q)$), $(M^\prime,Q^\prime)$ admits a global $Q$-section, and, moreover, the fiber is a contractible $Q$-manifold.
\end{definition}
This generates an equivalence relation for $Q$-bundles. Again, a practical way to identify an equivalent reduction is to find functions $w^a$ such that $w^a,Qw^a$ are independent functions that remain independent when restricted to a fiber (i.e. they can be taken as a part of a fiber coordinate system). It follows that at least locally the subbundle of $(M^\prime,Q^\prime)$ singled out by $w^a=0$ and $Qw^a=0$ is an equivalent reduction.~\footnote{Strictly speaking, in the infinite-dimensional case one should also require the existence of complementary fiber coordinates $\phi^i$ such that $Q\phi^i=Q^i(\phi)$. See~\cite{Barnich:2010sw} for more details.}

Finally, we are ready to formulate the definition of Gauge PDEs.
 \begin{definition}\label{equivred}
    1. Gauge PDE $(E,Q,T[1]X)$ is a $Q$ bundle $\pi:(E,Q)\rightarrow(T[1]X,\dX)$, where $X$ is a real manifold (independent variables). In addition it is assumed that $(E,Q,T[1]X)$ is locally equivalent to nonnegatively graded $Q$-bundle. Moreover, it should be equivalent to a jet-bundle BV-formulation seen as $Q$-bundle over $T[1]X$ with $Q=\dh+s$, where $s$ is the BV-BRST differential and $\dh$ the horizontal differential.\\
        2. Two gauge PDEs over $T[1]X$ are considered equivalent if they are equivalent as $Q$-bundles.
\end{definition}

Gauge PDEs encode local gauge theories. In particular, field configurations are identified with their sections while equations of motion arise as differential conditions on sections. More precisely, 
section $\sigma: T[1]X\rightarrow E$ is a solution to $(E,Q,T[1]X)$ if 
\begin{align}\label{predv-solutions}
        \dx\circ\sigma^{*}=\sigma^{*}\circ Q
    \end{align}
Infinitesimal gauge transformations of the section $\sigma$ are defined as    
\begin{align}\label{predv-gaugetransf}
        \delta\sigma^{*}=\dx\circ \chi_{\sigma}^{*}+ \chi_{\sigma}^{*}\circ Q,
    \end{align}
    where $\chi_{\sigma}^{*}: \cC^{\infty}(E)\rightarrow \cC^{\infty}(T[1]X)$ is of degree $-1$, satisfies
    \begin{align} \label{predv-xi-tr}
        \chi_{\sigma}^{*}(fg)=\chi_{\sigma}^{*}(f)\sigma^{*}(g)+(-1)^{\vert f\vert}\sigma^{*}(f)\chi_{\sigma}^{*}(g)\,, \qquad \forall f,g \in \cC^{\infty}(E)\,, 
        \end{align}
and $\chi_{\sigma}^*(\pi^*(h))=0$ for all $h\in \cC^{\infty}(T[1]X)$. The map $\chi_{\sigma}^*$ is interpreted as a gauge parameter. It is easy to check that the above gauge transformation is an infinitesimal symmetry of the equations of motion~\eqref{predv-solutions}. In a similar way one defines gauge for gauge symmetries.

It is often convenient to parameterize $\chi_{\sigma}^{*}$ in terms of a vertical vector field $Y$ on $E$ of degree $-1$: $\chi_{\sigma}^{*}=\sigma^{*}\circ Y$. It is easy to check that for this choice $\chi_{\sigma}^{*}$ \eqref{predv-xi-tr} is indeed satisfied. Using this representation the gauge transformation of $\sigma$ can be written as $\delta \sigma^*=\sigma^*\circ \commut{Q}{Y}$. Note that a vector field $V\equiv[Q,Y]$ is an infinitesimal symmetry of $(E,Q,T[1]X)$ because it preserves $Q$, the degree, and the bundle structure.

In the case of diffeomorphism-invariant systems, for instance gravity, their gPDE description 
usually requires the additional condition on the allowed class of sections. More precisely, the fiber coordinates typically involve a subset of ``diffeomorphism ghosts''
$\xi^a$, $a=0,\ldots \dim{X}-1$, $\gh{\xi^a}=1$ and sections are restricted by the condition that $e^a_\mu(x)$ defined via $\sigma^*(\xi^a)=e^a_\mu(x) \theta^\mu$, are invertible. This is of course a gPDE counterpart of the familiar condition in the frame-like formulation of gravity. All the systems considered in this work are of this type and the nondegeneracy condition on sections is assumed in what follows.

To complete the discussion of gPDEs let us note that gPDE automatically determine a nonlagrangian jet-bundle formulation of the underlying gauge system. This is induced on the bundle of super-jets of $E$ and its BV-BRST differential is the vertical part of the prolongation of $Q$ to the super-jet bundle. More details can be found in~\cite{Grigoriev:2019ojp,Grigoriev:2022zlq}, see also~\cite{Barnich:2010sw} for the original construction and local proof.

\subsection{Gauge PDEs with boundaries}

Let $X$ be a space-time manifold but now we assume that $X$ has a nontrivial boundary $\Sigma=\d X$ and let $i:\Sigma \to X$ denotes the embedding of the boundary.  Suppose we are given with a gPDE $(E,Q,T[1]X)$ on $X$. This induces a new gPDE $i^* E$ on $\Sigma$ given by a pullback of $E$ to $T[1] \Sigma \subset T[1]X$ (here by a slight abuse of notation $i$ also denotes an induced pushforward $T[1] \Sigma \to T[1]X$). It is easy to check that this is again a gPDE (e.g. by regarding it as a $Q$-submanifold of $E$), which we call induced boundary gPDE. $i^* E$ can be considered as a gPDE describing a gauge theory of unconstrained boundary values of the fields encoded in $(E,Q,T[1]X)$, see~\cite{Grigoriev:2022zlq} for more details and \cite{Bekaert:2012vt,Bekaert:2013zya} for the earlier and less general construction and applications in the context of higher spin holography.

Now we are interested in the gPDE description of systems with possibly nontrivial boundary conditions. We have the following:
\begin{definition} \label{def:BgPDE}
By a gauge PDE with boundaries we mean the following data: $(E,Q,T[1]X,E_\Sigma,T[1]\Sigma)$, where gPDE $(E,Q,T[1]X)$ is a gPDE on $X$ and $(E_\Sigma, Q_\Sigma, T[1]\Sigma)$ is a gPDE on the boundary $\Sigma=\d X$, which is a sub-gPDE of $i^*E$.  In particular, $Q_\Sigma$ is a restriction of $Q$ to $E_\Sigma\subset i^*E \subset E$. 
\end{definition}
Gauge PDE $(E_\Sigma,Q_\Sigma, T[1]\Sigma)$, which is a part of the above definition, can be regarded as a \textit{gPDE of boundary conditions}. For instance, if $E_\Sigma=i^*E$ this means that no boundary conditions are imposed. General situations are described by nontrivial $Q$-subbundles of $i^*E$. It is important to stress that in general, $E_\Sigma$ restricts not only the boundary values of fields but also the boundary values of gauge parameters and parameters of gauge-for-gauge symmetries (if any). 
\begin{rem}\label{rem:reduction}
Even if $E_\Sigma$  doesn't coincide with $i^*E$ it doesn't necessarily mean that we are dealing with nontrivial boundary conditions. This happens if $E_\Sigma$ is an equivalent reduction of $i^*E$ in which case $E_\Sigma$ implements elimination of auxiliary fields and pure gauge variables.
\end{rem}

Although the above definition is quite general, in the context of asymptotic symmetries it is useful to allow $(E,Q,T[1]X)$ to be \ruth{} slightly locally nontrivial. Namely, $E$ restricted to the interior of $X$ and $i^*E$ are still required to be locally trivial while the typical fiber of $i^* E$ can differ from the fiber over the interior by a subset of measure zero. For our present purposes it is enough to allow the fiber of $i^*E$ to be a manifold with boundary whose interior coincides with the typical fiber over the interior. In this case the total space of $E$ restricted to the interior of $X$ is itself the interior of a manifold with corners. Restricting it to $\d X$ gives a total space of $i^*E$ whose fiber is a manifold with boundary. As we are going to see, at more practical level we actually work with $i^*E$ and its sub-gPDE $E_\Sigma$ which are locally trivial. Note that a gPDE with boundary could be defined in terms of a single locally-nontrivial bundle $E^\prime \to T[1]X$ such that its fibers over the boundary shrinks to those of $E_\Sigma$ but we prefer to keep boundary conditions explicit.

The field theoretical interpretation of the above definition becomes clear with the help of:
\begin{definition}
1. A solution of $(E,Q,T[1]X,E_\Sigma,T[1]\Sigma)$ is a section $\sigma: T[1]X \to E$ satisfying $\sigma^* \circ Q= \dx \circ \sigma^*$ and such that its restriction to $T[1]\Sigma$ belongs to $E_\Sigma$, i.e. is a solution to $(E_\Sigma,Q_\Sigma, T[1]\Sigma)$.

2. A gauge parameter is a vertical vector field $Y$ on $E$ such that $\gh{Y}=-1$ and its restriction to $i^* E$ is tangent to $E_\Sigma \subset i^* E$. In other words, gauge parameters should satisfy the boundary conditions encoded in $E_\Sigma\subset i^* E$.

3. A gauge transformation of section $\sigma$ is defined as
\begin{equation}
\label{gs}
\delta_Y \sigma^* = \dx \circ  \sigma^* \circ Y + \sigma^* \circ Y \circ Q   \,. 
\end{equation}
\end{definition}
The following comments are in order: it is easy to check that if $\sigma$ is a solution then $\sigma+\delta_Y \sigma$ with $\delta_Y\sigma$ determined by~\eqref{gs}, is again a solution (to first order in $Y$).  Moreover, in this case the gauge transformation~\eqref{gs} can be rewritten as:
\begin{equation}
\label{gs2}
\delta_Y \sigma^* = \sigma^* \circ \commut{
Q}{Y}\,.    
\end{equation}
Restricting \eqref{gs} to $T[1]\Sigma$, one finds a standard gauge transformation for $(E_\Sigma,Q_\Sigma, T[1]\Sigma)$ whose parameter is $Y$ restricted to $E_\Sigma\subset  i^*E \subset E$ (recall that $Y$ is vertical and hence is tangent to $i^*E$, while the above definition requires $Y$ to be tangent to $E_\Sigma$). Gauge-for-gauge transformations can be defined in a similar way. In particular, parameters of the gauge-for-gauge transformations of stage 1 are vertical vector fields of degree $-2$ and tangent to $E_\Sigma$.

All the above definitions can be generalised to the case where $\Sigma$ is a generic submanifold. This version can be relevant in describing theories with defects. Generalization to the case of manifolds with corners or higher codimension strata is also possible.


\subsection{(Asymptotic) symmetries in gPDE terms}
\label{sec:as-sym-gPDE}

Let us now turn to the discussion of symmetries. Given a gauge PDE, an \textit{infinitesimal symmetry} is by definition a vector field $W$ which preserves all the structures, i.e. $\commut{Q}{W}=0$, $W$ is vertical, and $\gh{W}=0$ (though symmetries of the \ruth{} nonvanishing ghost number are also of interest).  The infinitesimal transformation of a solution $\sigma$ under a symmetry transformation determined by $W$ is defined to be: 
\begin{equation}
    \delta_W \sigma^*=\sigma^*\circ W\,.
\end{equation}
It is easy to check that it defines an infinitesimal symmetry transformation that takes solutions to solutions.
Gauge symmetries are the ones where $W=\commut{Q}{Y}$, where $Y$ is a gauge parameter. In particular, in this case the above transformation coincides with~\eqref{gs2} so that it is natural to regard symmetries of the form $W=\commut{Q}{Y}$ as trivial. As we only consider infinitesimal symmetries, in what follows we systematically omit "infinitesimal".

\textit{Inequivalent symmetries} (also known as \textit{global} or  \textit{physical})  can be defined as the respective quotient of all symmetries modulo the ideal of the gauge ones and hence are given by $Q$-cohomology in vertical vector fields. One can check that in the case of usual jet-bundle BV formulation of a local gauge theory, this definition reproduces the standard one, at least locally. Details can be found in Appendix~\bref{sec:gpdesym}. 

As far as gauge PDEs with boundaries are concerned, a natural definition of a symmetry is that it is a vertical vector field $W$ such that  $\commut{Q}{W}=0$ and $W$ restricted to $i^*E$ is tangent to $E_\Sigma \subset i^*E$. At the same time, genuine gauge symmetries in this case are those symmetries of the form $W=\commut{Q}{Y}$ whose  parameters $Y$ are tangent to $E_\Sigma$. All the above discussion applies to symmetries of arbitrary definite ghost degree.

Let now $(E,Q,T[1]X,E_\Sigma,T[1]\Sigma)$ be a gPDE with boundaries. A common lore is that asymptotic symmetries are gauge symmetries of the system (defined as if there were no boundary conditions) that preserve boundary conditions while those whose parameters satisfy  boundary conditions for gauge parameters are genuine gauge symmetries and should be considered trivial. In the case of Lagrangian systems the extra requirements are to be imposed.  In the gPDE setup this can be formalised as follows:
\begin{definition}
Asymptotic symmetries of $(E,Q,T[1]X,E_\Sigma,T[1]\Sigma)$ are symmetries of $(E_\Sigma,Q_\Sigma,T[1]\Sigma)$,
which are restrictions to $E_\Sigma$ of those gauge symmetries of $i^* E$ that preserve $E_\Sigma \subset i^* E$. Gauge symmetries of $(E_\Sigma,Q_\Sigma,T[1]\Sigma)$, i.e. 
vector fields on $E_\Sigma$ of the form $\commut{Q}{Y}|_{E_\Sigma}$ with $Y$ tangent to $E_\Sigma$,
are considered trivial asymptotic symmetries.
\end{definition}
More explicitly, asymptotic symmetries are vertical vector fields that are tangent to $E_\Sigma$
and have the form $\commut{Q}{Y}$ with $Y$ vertical. These are considered modulo vector fields vanishing on $E_\Sigma$. Moreover, vector fields $\commut{Q}{Y}$ with $Y$ tangent to $E_\Sigma$ are genuine gauge symmetries and are therefore trivial. Asymptotic symmetries form a subalgebra of all symmetries of $(E_\Sigma,Q_\Sigma,T[1]\Sigma)$.

An alternative would be to define asymptotic symmetries as all symmetries of $(E_\Sigma,Q_\Sigma,T[1]\Sigma)$ modulo its own gauge symmetries.  Another alternative is to require asymptotic symmetries to arise as restrictions to $E_\Sigma$ of symmetries of $E$, but this does not seem to make any difference if we restrict ourselves to local analysis, as we do in this work.

Later on we also need a slightly more general framework applicable to the case where $E_\Sigma$ is not a regular submanifold of $i^*E$ while its prolongation to the bundle of jets of its sections is. This occurs in applications in which a frame field arises as a component of $\sigma^*(\xi^a)$, where $\sigma:T[1]\Sigma \to i^*E$, and therefore the condition that the frame field is invertible cannot be implemented in terms of the fiber geometry and is imposed instead as a condition on sections.

Such formulations arise in practice if one is after a concise formulation of the boundary gPDE. To cover this situation one allows $Y$ to be a generalized vector field, i.e. its coefficients are allowed to depend on jets of sections. If $\cI_{\Sigma}$ is the ideal of $E_\Sigma \subset i^*E$ then the condition that $\commut{Q}{Y}$ is tangent to $E_\Sigma$ is replaced by
\begin{equation}
    \dx \sigma^*(Yf)+\sigma^*(YQf)=0 \quad \forall f \in \cI_\Sigma\,.
\end{equation}
This should hold for all sections of the gPDE of boundary condition, i.e.  sections of $i^*E$ such that $\sigma^*(\cI_\Sigma)=0$. The above condition ensures that the corresponding symmetry transformation sends solutions to $E_\Sigma$ to themselves. Note that if one doesn't want to employ such a generalization it is always possible to work in terms of the associated parent gPDE whose underlying bundle is the super-jet bundle of $i^*E$
and hence the prolongation of $E_\Sigma$ is a smooth submanifold, see~\cite{Barnich:2010sw,Grigoriev:2019ojp} for further details of parent gPDEs.


\subsection{Asymptotic symmetries in the presymplectic gPDE framework}

In the case of Lagrangian systems, asymptotic symmetries can be defined as gauge transformations whose associated charges become nontrivial due to boundary conditions. Althogh in this work we restricted ourselves to the analysis at the level of equations of motion, let us briefly comment on how such an approach can be implemented in the gPDE framework. 

A Lagrangian system can be described by a gPDE $(E,Q,T[1]X)$ equipped with the compatible presymplectic structure  $\omega$ of degree $n-1$, $n=\dim{X}$, such that:
\begin{equation}
d\omega=0\,, \qquad L_Q\omega \in \cI \,, \qquad i_Q L_Q\omega\in \cI\,,
\end{equation}
where $\cI$ denotes the ideal of forms on $E$ generated by the forms of positive degree, pulled back from the base, i.e. by $dx^\mu,d\theta^\mu$ in standard coordinates. We also fix a presymplectic potential $\chi$ such that $\omega=d\chi$. Note that for $n>1$ it exists globally as the respective de Rham cohomology is empty. More details on presymplectic gPDEs can be found in~\cite{Grigoriev:2022zlq,Dneprov:2022jyn,Grigoriev:2020xec}, see also \cite{Grigoriev:2016wmk,Grigoriev:2021wgw,Alkalaev:2013hta} for earlier relevant works.

We say that boundary gPDE $E_\Sigma$ and symplectic potential $\chi$ are compatible if the pullback of $\chi$ to $E_\Sigma$ vanishes. It turns out that this is a sufficient condition for the respective action to be differentiable. Indeed, the above data defines a presymplectic AKSZ-like action (also known as intrinsic action):
\begin{equation}
 S[\sigma]=\int_{T[1]X} \sigma^*(\chi)(\dx)+
 \sigma^*(H)\,,
\end{equation}
where the ``covariant Hamiltonian/BRST charge"  $H$ is defined through $i_Q\omega+dH\in \cI$. Note that picking an equivalent $\chi=\chi+d\alpha$ doesn't affect equations of motion but adds a boundary term $\int\dx \sigma^*(\alpha)$ to the above action.  Consider a variation of $S[\sigma]$ under $\sigma \to \sigma+\delta\sigma$. Representing $\delta\sigma$ as $\sigma^* \circ V$, where $V$ is a vertical vector field on $E$, one finds that the boundary term has the form:
\begin{equation}
\delta S=\int \text{``EOM''}\,\, \delta\sigma+\int \sigma^*(i_V \chi)   
\end{equation}
Because $\delta \sigma$ should preserve boundary conditions, $V$ is tangent to $E_\Sigma$ so that boundary contribution vanishes provided  $\chi$ and $E_\Sigma$ are compatible as we assume in what follows.

Let us now turn to conserved currents (conservation laws) associated to symmetries. A symmetry $W$ is called compatible with presymplectic structure if 
\begin{equation}
\label{omega-cons}
 L_W \omega+L_Q d\alpha \in \cI   \,,
\end{equation}
for some $1$-form $\alpha$ of ghost degree $\gh{\alpha}=\gh{W}+n-2$.  Given a compatible symmetry one can define an associated generalised conserved current, which is a degree $\gh{W}+n-1$ function defined through:~\footnote{The discussion of global symmetries and conserved currents in the presymplectic BV-BRST approach can be found in~\cite{Sharapov:2016sgx}.}
\begin{equation}
\label{H-def}
i_W \omega -(-1)^{\p{W}}L_Q\alpha-d H_W\in \cI\,,
\end{equation}
The consistency condition $d(i_W \omega  -(-1)^{\p{W}} L_Q\alpha)\in \cI$ holds thanks to \eqref{omega-cons}. If $\alpha$ is fixed  $H_W$ is defined modulo functions of the form $\pi^*(f), f\in \cC^{\infty}(T[1]X)$. Moreover,  it follows $d(QH_W)\in \cI$ and hence $Q H_W= \pi^*(h)$ for some function $h$ on $T[1]X$ so that by adding to $H_W$ a function of the form $\pi^*(f)$ one can achieve $QH_W=0$. This defines a map from compatible symmetries to conserved currents.

Given a conserved current, i.e. a $Q$-closed function $H_W$ on $E$, one can define the respective charge as
\begin{equation}
\mathbf{H_W}[\sigma]= \int_{T[1]C}\sigma^*(H_W)\,,  
\end{equation}
where $\sigma$ is a solution ($Q$-section) of $E$ restricted to a shifted tangent bundle of submanifold $C \subset X$ of codimension $1-\gh{W}$. Note that adding a $Q$-exact piece to $H_W$ results in the addition of a $\dx$-exact term in the integrand and hence this only  contributes to the boundary term (if $C$ has a nontrivial boundary). The above charge doesn't change under deformations of $C$ provided $\d C$ is kept undeformed, because $\dx \sigma^*(H_W)=\sigma^*(QH_W)=0$ if $\sigma$ is a solution. 

If $W$ is a gauge symmetry, i.e. $W=\commut{Q}{Z}$ with $Z$ vertical it is automatically compatible with $\omega$ because  $L_{\commut{Q}{Z}} \omega  = -(-1)^{\p{Z}}L_Q d L_Z\chi+\cI$. The associated conserved current determined by the above map is $Q$-exact and can be taken in the form $H_W=Q (i_Z\chi)$. In particular, in the case of a gPDE with boundary the currents associated to gauge symmetries (i.e. with $Z$ tangent to $E_\Sigma$) necessarily vanish on the boundary provided $\chi$ and $E_\Sigma$ are compatible. In other words charges associated to genuine gauge symmetries vanish while those associated to asymptotic symmetries are generally nontrivial.


\section{GR as a gauge PDE}\label{sec:GR-as-sim}

\subsection{Off-shell GR as a gauge PDE}

We start by reformulating Riemannian geometry as a local gauge theory or, more precisely, a gauge PDE. This system can also be seen as an off-shell gravity, i.e. a gauge theory whose fields are components of the metric and gauge transformations act as diffeomorphisms. 

In the gPDE language the underlying bundle is given by:
\begin{equation}
    \cE \to X\,, \qquad  \cE=(T^*X \vee T^*X)_{\mathrm{nd}}\oplus T[1]X\,.
\end{equation}
Sections of the first summand are metrics $\tilde g_{ab}(x)$ while coordinates on the fibers of the second summand are diffeomorphism ghosts $\xi^a$. The desired gPDE $\tilde E \to T[1]X$ can be taken to be $J^\infty(\cE) \to X$, pulled back to $T[1]X$ by the canonical projection $T[1]X\to X$. In plain words, the fiber coordinates are $D_{(a)}\tilde{g}_{bc}, D_{(a)}\xi^{b}$, where $D_a$ denote canonical total derivative in $J^\infty(\cE)$, $(a)$ denotes a symmetric multi-index, and the degree is assigned in a standard way: $\gh{D_{(a)}\tilde{g}_{bc}}=0, \gh{D_{(a)}\xi^{b}}=1$.  In terms of local coordinates the $Q$ structure is determined by
\begin{align}\label{Q-riem}
     Qx^\mu=\theta^{\mu},\qquad Q \tilde{g}_{bc}=\xi^a D_a \tilde{g}_{bc}+\tilde{g}_{ac}D_{b}\xi^a+\tilde{g}_{ba}D_c \xi^a,\qquad
    Q\xi^b=\xi^a D_a \xi^b,
\end{align}
and $[Q,D_a]=0$. Note that $\tilde{g}_{bc}=\tilde{g}_{cb}$ and $ \tilde g_{bc}$ is invertible.  Note also that here we use generic coordinates $x^\mu$ on the base $X$. To be more specific, once the jet bundle and the action of $Q$ on the fiber is defined in terms of a fixed coordinate system $x^a$ we have a freedom of using any coordinate system $x^\mu$ on the base. This happens because $Q$ is locally a product $Q$-structure of $\dx$ and the $Q$-structure of the typical fiber or, in other words, the underlying bundle is locally trivial as a $Q$-bundle.

In what follows we often refer to $\tilde g_{ab}$ as a metric. Similarly, we refer to the Cristoffel symbols, Riemann curvature, etc.  seen as respective functions in $D_{(a)}\tilde{g}_{bc}$ just as Christoffel symbols, Riemann curvature, etc. This is natural because such local functions coincide with the respective objects if one evaluates them on the prolongation of a section $\sigma_0\,:X\, \to \cE$. However, from the gPDE point of view, this only happens in a particular gauge~\eqref{metric-gauge}.

\begin{rem}
The above gPDE is not exactly the standard BV-BRST jet-bundle equipped with the horizontal differential $\dh$ and the BRST differential $\gamma$, see e.g.~\cite{Barnich:1995ap,Barnich:1995db}. Although the action of $Q$ on fiber variables coincides with that of the standard BV-BRST differential it actually corresponds to the total BRST differential $\dh+\gamma$. More precisely, thanks to the diffeomorphism invariance one can bring $\dh+\gamma$ to the form~\eqref{Q-riem} by a change of fiber coordinates, see e.g.~\cite{Barnich:2010sw} for more details. 
\end{rem}

A local proof of the equivalence of the above gPDE and the standard jet-bundle BV-BRST formulation of off-shell GR can be found in~\cite{Barnich:2010sw}. In any case, it is not difficult to explicitly consider solutions and gauge transformations and check that we are indeed dealing with the off-shell GR. Probably the simplest way to see the equivalence is to observe that the gauge condition
\begin{equation}
\label{metric-gauge}
\sigma^*(\xi^a)=\theta^a, \qquad \sigma^*(D_{(b)} \xi^a)=0
\end{equation}
is reachable locally. In this gauge the remaining equations of motion simply tell us that $\tilde{g}_{ab}$ is unconstrained and that $\sigma^*(D_{(a)} \tilde{g}_{bc})=\d_{(a)} \sigma^*(\tilde{g}_{bc})$. Moreover, in this gauge the residual gauge parameters $\chi^*(D_{(a)}\xi^b)$ are all determined by $\epsilon^a(x)=\chi^*(\xi^a)$ and the residual gauge transformation of $\sigma^*(\tilde{g}_{ab})$ is given by $L_{\epsilon}\sigma^*(\tilde{g}_{ab})$ so that indeed we are dealing with off-shell gravity. More detailed discussion of the analogous gauges in a more general context can be found in~\cite{Basile:2022nou}.

The above system can be equivalently extended to provide a gauge-theoretical implementation of the Penrose description of asymptotically simple spaces~\cite{Penrose:1962ij,Penrose:1964ge}. More specifically, we extend the fiber of $\cE$ with extra coordinates $\Omega$, $\Omega>0$ and $\lambda$, with $\gh{\Omega}=0, \gh{\lambda}=1$ and extend $Q$ as follows:
\begin{align} \label{riem-omega}
    Q\Omega=\xi^{a}D_{a}\Omega+\lambda\Omega,\qquad Q\lambda=\xi^a D_a\lambda,\qquad \commut{D_a}{Q}=0 \,.
\end{align}

Condition $\Omega>0$ is crucial in ensuring the equivalence of the initial and the extended system in the sense of \bref{equivred}. Thanks to this condition we can introduce new coordinates $g_{bc}\equiv \Omega^2 \tilde{g}_{bc}$. In these coordinates the action of $Q$  is given by:
\begin{align}
\label{qext}
        Qx^\mu=\theta^{\mu},\quad Q g_{bc}=\xi^a D_a g_{bc}+g_{ac}D_{b}\xi^a+g_{ba}D_c \xi^a+2\lambda g_{bc},\quad
    Q\xi^b=\xi^a D_a \xi^b\,.
    \end{align}
\begin{definition}\label{def-confgeom}  
The extended system with $\Omega>0$ and the $Q$ structure determined by~\eqref{qext},\eqref{riem-omega} is called conformal-like off-shell gravity.
\end{definition}
Note that the gauge transformations of $\sigma^*(g_{ab})$  can be identified (for instance by employing partial gauge condition \eqref{metric-gauge} along with $\sigma^*(D_{(a)}\lambda)=0$) with the action of diffeomorphisms and Weyl transformations whose parameters are associated to ghosts $\xi^a$ and $\lambda$.  Let us also note that if we allow $\Omega$ to vanish the sub-gPDE determined by $\Omega=0,\, D_{(a)}\Omega=0$ gives the gPDE reformulation of the conformal geometry, which is known in the literature in one or another version~\cite{Boulanger:2004eh,Joung:2021bhf,Dneprov:2022jyn}.

\subsection{Conformal-like on-shell GR}

From the field theory perspective the systems presented above are off-shell gauge theories, i.e. theories equivalent to a set of unconstrained  fields subject to gauge transformations. We are mostly interested in gravity-like theories, where fields are subject to nontrivial differential equations. The respective gPDE description can be obtained by considering a $Q$-subbundle of the initial off-shell system. In the case of off-shell GR \eqref{Q-riem} the $Q$-subbundle is defined as an infinite prolongation of the Einstein equations
\begin{align}\label{einst-eq}
        D_{(a)}(\tilde{R}_{bc}-\dfrac{\tilde{g}_{bc}}{d}\tilde{R})=0,     \quad \tilde{R}=\dfrac{2d}{d-2}\Lambda,
\end{align}
where $\tilde{R}_{bc}$, $\tilde{R}\equiv \tilde{g}^{bc}\tilde{R}_{bc}$ are local functions in $D_{(a)}\tilde{g}_{bc}$ corresponding to Ricci tensor and the scalar curvature respectively.  It is easy to see that $Q$ restricts to the submanifold and hence this indeed defines a gPDE, to which we refer in what follows as the \textit{on-shell GR}. 

In what follows we often encounter gPDEs defined as subbundles of other gPDEs i.e. jet-bundles. A convenient way to describe (coordinate) functions on such a subbundle is to regard them as the equivalence classes of functions modulo those vanishing on the subbundle. Alternatively, the restrictions of the ambient coordinates to the subbundle can be regarded as an overcomplete coordinate system therein.

Our aim now is to equivalently reformulate on-shell GR as a sub-gPDE of the conformal-like off-shell GR defined in~\bref{def-confgeom}. To this end consider a subbundle singled out by the following constraints:
\begin{align}\label{einst-AE}
D_{(a)}F_{bc}=0,\qquad \Omega\rho+\frac{g^{ab}}{2}D_a\Omega D_b\Omega=-\frac{\Lambda}{(d-1)(d-2)}\,,
\end{align} 
where
\begin{align}\label{Frho-def}
    F_{bc}\equiv D_b D_c \Omega- \Gamma_{bc}^{d}D_{d}\Omega+\Omega P_{bc} + \rho g_{bc},\\ \rho\equiv-\dfrac{1}{d}g^{bc}(D_{b}D_{c}\Omega- \Gamma_{bc}^{d}D_{d}\Omega+P_{bc}\Omega)
\end{align}
and $\Gamma_{bc}^{d}$, $P_{bc}$ are respectively the Christoffel symbols and the Schouten tensor seen as functions in the jets of the metric. 

Equations \eqref{einst-AE} are known as the \textit{almost Einstein equation}. However, usually they are interpreted as equations on $\Omega$ while metric $g_{ab}$ is considered fixed, see~\cite{BEG,Curry:2014yoa} for more details. Now  we treat~\eqref{einst-AE} as equations restricting both $g_{ab}$ and $\Omega$.
\begin{definition}\label{def-confeinst}
The sub-gPDE of the conformal-like off-shell GR \bref{def-confgeom}, which is determined by constraints \eqref{einst-AE}, is called \ROD conformal-like on-shell GR.
\end{definition}
The name is justified by the following:
\begin{prop}
     For $d\geq3$ conformal-like on-shell GR is equivalent to on-shell GR~\eqref{einst-eq}.
\end{prop}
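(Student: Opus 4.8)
The plan is to show that the sub-gPDE of conformal-like off-shell gravity cut out by the almost Einstein equations \eqref{einst-AE} is an equivalent reduction of on-shell GR \eqref{einst-eq}, in the sense of Definition \bref{equivred}, by exhibiting a contractible $Q$-bundle structure. First I would use the condition $\Omega>0$ to pass to the coordinates $g_{bc}=\Omega^2\tilde g_{bc}$ as in \eqref{qext}, and then introduce the \emph{Weyl compensator gauge} by treating $\Omega$ (and its jets) together with $\lambda$ (and its jets) as candidate contractible coordinates: the point is that $\lambda$ and $Q\lambda$-type combinations, as well as $\Omega$ and a suitable component of $F_{bc}$ (its trace-free part) and of $Q\Omega$, form contractible pairs on the constraint surface. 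Concretely, I would check that on the subbundle \eqref{einst-AE} one can algebraically solve the equation $\Omega\rho+\tfrac12 g^{ab}D_a\Omega D_b\Omega = -\Lambda/((d-1)(d-2))$ and the equation $D_{(a)}F_{bc}=0$ for the jets of $\Omega$ in terms of the jets of $g_{ab}$, so that $\Omega$ and its derivatives are not independent fiber coordinates but are determined (up to the residual Weyl ghost $\lambda$) by the metric.

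Next I would verify that, having eliminated the $\Omega$-sector, the remaining constraints on $g_{ab}$ are precisely the Einstein equations. This is the classical content of the almost Einstein / tractor story: the existence of $\Omega$ with $F_{bc}=0$ (equivalently, a parallel standard tractor) forces $g_{ab}$ — in the conformal class — to admit an Einstein representative, and the scalar constraint in \eqref{einst-AE} fixes the value of the scalar curvature to $\tilde R = 2d\Lambda/(d-2)$, matching \eqref{einst-eq}. I would cite \cite{BEG,Curry:2014yoa} for the underlying PDE statement (for $d\geq 3$, so that the Schouten tensor and the decomposition into trace and trace-free parts behave as expected) and then observe that prolongation commutes with this elimination, since $[D_a,Q]=0$ and the constraints are differential consequences of their lowest members. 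The upshot is that after removing the contractible $(\Omega,\lambda)$-pairs one lands exactly on the infinite prolongation of \eqref{einst-eq}, possibly still carrying the extra Weyl ghost $\lambda$, which itself pairs off contractibly against the trace part of $F_{bc}$ — this is what makes $g_{ab}=\Omega^2\tilde g_{ab}$ only \emph{conformally} Einstein rather than Einstein on the nose, but the $Q$-cohomology, and hence the gauge theory, is unchanged.

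The main step requiring care is establishing that the elimination is genuinely a \emph{contractible} $Q$-bundle reduction and not merely a set-theoretic identification: I must produce functions $w^a$ (built from $\lambda$, its jets, $\Omega-1$ or rather the trace-free parts of $F_{bc}$, and appropriate jet components of $\Omega$) such that $w^a$ and $Qw^a$ are independent and remain independent on the fibers, with complementary coordinates closing on themselves under $Q$ — this is the practical criterion quoted after Definition \bref{equivred} and footnoted from \cite{Barnich:2010sw}. The delicate point is the interplay of the two equations in \eqref{einst-AE}: $D_{(a)}F_{bc}=0$ is reducible because $F_{bc}$ is not independent of $\rho$ and the scalar constraint — I expect the trace of $F_{bc}$ to be proportional (modulo the scalar constraint) to a total-derivative consequence, so the independent content is the trace-free part of $D_{(a)}F_{bc}$ plus the single scalar equation, and it is precisely these that, together with the Weyl ghosts, form the contractible directions. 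I would organize the coordinate change so that $(\lambda_{(a)}, \text{tracefree } F_{bc}, \text{and the jet of }\Omega\text{ normal to the constraint})$ appear as $w^a$, their $Q$-images as $v^a=Qw^a$, and $(D_{(a)}g_{bc}\ \text{on-shell}, \xi^a_{(a)})$ as the surviving $\phi^i$ with $Q\phi^i=Q^i(\phi)$ reproducing \eqref{Q-riem} restricted to \eqref{einst-eq}. The restriction $d\geq 3$ enters exactly in ensuring the Schouten tensor is well-defined and the trace/trace-free split of $F_{bc}$ is nondegenerate, so that these $w^a, v^a$ are indeed independent.
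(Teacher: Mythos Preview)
Your overall strategy is close to the paper's, but the execution conflates two steps that the paper keeps cleanly separated, and in doing so you propose contractible pairs that do not work as stated. The paper's argument is: (i) \emph{off-shell}, the $(\Omega,\lambda)$-sector is contractible on its own --- since $\Omega>0$ one may take $w_{(a)}=D_{(a)}\ln\Omega$, whose $Q$-images involve $D_{(a)}\lambda$ linearly, so $\{D_{(a)}\ln\Omega,\;D_{(a)}\lambda\}$ are contractible pairs with no reference to any field equation; (ii) in the surviving $\tilde g$-coordinates the almost Einstein constraints \eqref{einst-AE} become precisely \eqref{einst-eq}, which is a direct computation using the Weyl-transformation rule for the Schouten tensor (as in \cite{BEG}). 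Note also that the direction of the reduction is the reverse of what you wrote: on-shell GR is an equivalent reduction of conformal-like on-shell GR, not the other way around.

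Where your proposal breaks is in the choice of $w$'s and $v$'s. You suggest pairing $\Omega$ with ``a suitable component of $F_{bc}$ (its trace-free part)'' and later pairing $\lambda$ with ``the trace part of $F_{bc}$'' --- but $F_{bc}=0$ is the constraint \emph{defining} the sub-gPDE, so on that surface $F_{bc}$ vanishes identically and cannot serve as an independent coordinate $v^a=Qw^a$. Likewise, the claim that one can ``algebraically solve \ldots for the jets of $\Omega$ in terms of the jets of $g_{ab}$'' is false: the equations $D_{(a)}F_{bc}=0$ eliminate only the second and higher jets of $\Omega$, while $\Omega$, $D_a\Omega$ and $\rho$ remain independent fiber coordinates (this is exactly what Proposition~\bref{prop:bulk-red} records). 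What makes these residual $\Omega$-coordinates removable is not that $g$ determines them, but that they pair contractibly with $\lambda$ and its jets through $Q\Omega=\xi^aD_a\Omega+\lambda\Omega$. Once you separate ``contractibility of $(\Omega,\lambda)$'' from ``matching the constraints'', the argument collapses to the short paragraph the paper gives.
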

\begin{proof}
First of all recall that $\Omega>0$. In terms of $\tilde g_{ab}=\Omega^{-2}g_{ab}$, the Einstein equations have the standard form~\eqref{einst-eq} while (the derivatives of) $\Omega$ and $\lambda$ form contractible pairs and can be eliminated. There remains to show that the Einstein equations rewritten in terms of $g_{ab}$ are equivalent to \eqref{einst-AE}. This can be checked using the well-known, see e.g.~\cite{BEG}, transformation rules of the Schouten tensor under Weyl transformations.
\end{proof}
It is important to stress that the above equivalence crucially relies on the condition $\Omega>0$. At the same time,  the conformal-like on-shell GR is perfectly well-defined without this condition and is going to be very instrumental in studying the boundary behaviour. Note also that the above system with no restrictions on $\Omega$ provides a gPDE description of tractor geometry though, to keep the exposition concise, we refrain from giving details here.

\subsection{Pre-minimal model for conformal-like on-shell GR}

In what follows we need a certain equivalent reduction (in the sense of definition \bref{equivred}) of conformal-like on-shell GR. This can be done in two steps. The first step is to concentrate on the sector of $g_{ab},\xi^a,\lambda$ and their jets. This sector is precisely the one that gives a gPDE description of conformal geometry (also regarded as the off-shell conformal gravity) so that one can eliminate contractible pairs as explained in \cite{Boulanger:2004eh}, see also \cite{Dneprov:2022jyn} for the discussion in similar language. Namely, 
    \begin{align}
    Q\Gamma_{ab}^c=\dots-D_a D_b\xi^c,\qquad
    QP_{ab}=\dots-D_a D_b \lambda
\end{align}
allows one to eliminate $\Gamma_{ab}^c$, $D_a D_b\xi^c$, $P_{ab}$, $D_a D_b \lambda$ as well as all their symmetrized total derivatives. This reduction is quite different for the cases $d=3$ and $d\geq4$, so in what follows we assume $d\geq 4$. However, generalization to $d=3$ is possible. The remaining jets of the metric have the meaning of the Weyl tensor $\We^{b}{}_{cde;(a)}$ and its covariant total derivatives. All the symmetries of the usual Weyl tensor are preserved: antisymmetry over pairs of indices, Bianchi identities, tracelessness, and so on. It is always possible to choose components of these tensors in such a way that they form a part of the coordinate system. However, for our purposes it is more convenient to keep all the components and use them as an overcomplete coordinate system. We introduce the notation $\Co_{dab}=-\dfrac{1}{(d-3)} \We^{c}{}_{dab;c}$.

Then the action of $Q$ on some coordinates in this sector is given by
 \begin{align}\label{gran-ish1}
    \begin{split}
        &Q g_{bc}=C_{b}{}^a g_{ac}+C_{c}{}^a g_{b a}+2\lambda g_{bc},\quad         Q\xi^b=\xi^a C_{a}{}^{b},\\
        &Q\lambda=\xi^a\lambda_a,\quad
        Q\lambda^b=C^{b}{}_a \lambda^a+\dfrac{1}{2}\xi^a \xi^d \Co^{b}{}_{ad},\\
        &Q C_{b}{}^c=C_{b}{}^a C_{a}{}^c+\lambda_b\xi^c-\lambda^c \xi_b+\delta_{b}^c \lambda_a \xi^a +\dfrac{1}{2}\xi^a\xi^d \We^{c}{}_{b a d}, 
    \end{split}
\end{align}
where the standard convention for raising and lowering indexes is used, for example $\lambda^{a} \equiv  g^{ab}\lambda_b$, and $C_{a}{}^{b} \equiv D_a\xi^b$. The  action of $Q$ on $\We^{b}{}_{cde;(a)}$ can also be obtained by a straightforward (but tedious) calculation.

In the second step we analyze the sector of $\Omega$.  If $\Omega_{(a)}$ denotes the restriction of $D_{(a)}\Omega$ to the system obtained in the 1st step, one finds that equations $D_{(a)}F_{bc}=0$ from \eqref{einst-AE} restricted to the surface defined by this reduction, takes the following form:
    \begin{align} \label{einst-svyazi-2}
    \begin{split}
        &\Omega_{ab}+g_{ab}\rho=0,\\
        &\Omega_{a_1\dots a_n}=0,\quad  n\geq3\,\\
        &\nabla_{a_1}\dots\nabla_{a_{n-3}}(\We^{d}{}_{a_n a_{n-2} a_{n-1}}\nabla_d \Omega + \Co_{a_n a_{n-2} a_{n-1}}\Omega)=0.
    \end{split}
\end{align}
where $\rho=-\frac{1}{d}g^{ab}\Omega_{ab}$ and for any degree $0$ coordinate $\varphi$, $\nabla_a\varphi$ is defined through $Q\varphi=\xi^a\nabla_a\varphi+\dots$. The first equation is just $F_{bc}=0$ restricted to the subbundle defined by previous reduction, the second is the totally-symmetric component of $D_{(a_1\ldots)}F_{a_{n-1}a_n}=0$, and the third one is the remaining irreducible component in $D_{(a_1\ldots)}F_{a_{n-1}a_n}=0$. These equations fix all the jets of $\Omega$ except for $\Omega$, $\Omega_a$ and $\rho$. For $n=3$ the third equation in \eqref{einst-svyazi-2} takes the form
\begin{align}
    \We^{d}{}_{a_n a_{n-2} a_{n-1}}\nabla_d \Omega + \Co_{a_n a_{n-2} a_{n-1}}\Omega=0
\end{align}
and is known as a part of the Fridrich equations, see e.g. \cite{Kroon:2016ink}. After taking into account the above equations and introducing $n^{a}\equiv g^{ab}\Omega_b$ in place of $\Omega_a$, the action of $Q$ on $\Omega_a, n^a,\rho$ takes the form:
\begin{align}\label{gran-ish2}
    \begin{split}
        &Q \Omega=\xi^a g_{ab}  n^b+\lambda\Omega,\quad Q n^b=-\xi^b\rho-n^a C_{a}{}^b+\lambda^b\Omega-\lambda n^b,\quad 
        Q\rho=-\lambda\rho-\lambda_a n^a.
    \end{split}
\end{align}
The results of this subsection can be summarized in the following proposition:
\begin{prop}
\label{prop:bulk-red}
    For $d\geq 4$ the gPDE defined in \bref{def-confeinst} is equivalent to its sub-gPDE $(E,Q,T[1]X)$ with the  following overcomplete set of  fiber coordinates $\{g_{bc}, \Omega, n^b,\rho, \xi^b, C_{b}{}^c, \lambda, \lambda^b, \We^{b}{}_{cde;(a)}, |a|\geq 0 \}$ which are understood modulo the ideal generated by the following constraints:
\begin{equation}
\begin{aligned}
\label{constraints-mb}
    &\Omega\rho +\frac{1}{2}g_{ab} n^a n^b=-\frac{\Lambda}{(d-1)(d-2)},\\
    &\nabla_{a_1}\cdots \nabla_{a_n}(\We^{b_3}{}_{cb_1 b_2}n^c-\Co^{b_3}{}_{b_1b_2}\Omega)=0, \quad n\geq 0.
\end{aligned}
\end{equation}
The action of $Q$ on all coordinates except curvatures $\We^{b}{}_{cde;(a)}$ is given by \eqref{gran-ish1}, \eqref{gran-ish2}.
\end{prop}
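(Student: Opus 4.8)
The plan is to exhibit the sub-gPDE as the outcome of the two-step equivalent reduction sketched above, checking at each stage that the variables discarded, together with their $Q$-images, form admissible contractible pairs in the sense of \bref{equivred}; in the infinite-dimensional jet setting this also requires that the complementary fiber coordinates retained have $Q$-transforms closing among themselves.

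\textbf{Step 1: the conformal-geometry sector.} First I would restrict to the sub-sector generated by $g_{ab},\xi^a,\lambda$ and their jets, which is exactly the gPDE of conformal geometry. From $Q\Gamma^c_{ab}=\dots+D_aD_b\xi^c$ and $QP_{ab}=\dots+D_aD_b\lambda$ (and the analogous relations for symmetrized total derivatives), the functions $\Gamma^c_{ab},P_{ab}$ together with all their symmetrized derivatives have $Q$-images that are independent modulo themselves; hence by the practical criterion for equivalent reductions they may be eliminated along with $D_aD_b\xi^c,D_aD_b\lambda$ and derivatives, precisely as in \cite{Boulanger:2004eh} (see also \cite{Dneprov:2022jyn}). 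For $d\geq 4$ the surviving metric jets reorganize into the Weyl tensor $\We^b{}_{cde}$ and its covariant total derivatives $\We^b{}_{cde;(a)}$, preserving the usual algebraic symmetries and Bianchi identities, and reading off the residual action of $Q$ from \eqref{qext},\eqref{riem-omega} after the change of fiber variables yields \eqref{gran-ish1}.

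\textbf{Step 2: the $\Omega$-sector.} On the bundle obtained after Step 1 I would restrict the almost-Einstein constraints $D_{(a)}F_{bc}=0$ of \eqref{einst-AE}, with $F_{bc},\rho$ as in \eqref{Frho-def}. Decomposing each $D_{(a_1\dots a_k)}F_{bc}=0$ into $GL(d)$-irreducible components and using $Q\varphi=\xi^a\nabla_a\varphi+\dots$ to identify $\nabla_a$, these split into the three families of \eqref{einst-svyazi-2}: the totally symmetric part gives $\Omega_{ab}+g_{ab}\rho=0$ at second order and $\Omega_{a_1\dots a_n}=0$ for $n\geq 3$, while the remaining irreducible components give the Weyl/Cotton tower. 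The first family lets one discard the traceless part of $\Omega_{ab}$ and all higher $\Omega_{(a)}$, $|a|\geq 3$, as contractible pairs, leaving only $\Omega$, $n^a\equiv g^{ab}\Omega_b$ and $\rho=-\tfrac1d g^{ab}\Omega_{ab}$; the Weyl/Cotton family descends to the tower in the second line of \eqref{constraints-mb}, and the scalar equation of \eqref{einst-AE} becomes its first line. Computing the leftover $Q$-action on $\Omega,n^a,\rho$ directly gives \eqref{gran-ish2}, and verifying $Q^2=0$ together with the $Q$-consistency of \eqref{constraints-mb} on the reduced bundle completes the argument.

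I expect \textbf{Step 2} to be the principal obstacle. One must verify that after the Step 1 reduction the infinite system $D_{(a)}F_{bc}=0$ really organizes into exactly the three irreducible families of \eqref{einst-svyazi-2} with no additional hidden relations — this is the tedious part: decomposing $D_{(a_1\dots a_k)}F_{bc}$ under $GL(d)$, tracking which components are new contractible pairs to be eliminated and which descend to genuine constraints, and confirming that $Q$ is well-defined on the quotient bundle. The remaining computations — the explicit forms \eqref{gran-ish1}, \eqref{gran-ish2} and the closure checks — are routine but lengthy, and for $d=3$ the Step 1 reduction differs (the Cotton tensor replaces part of the Weyl data), which is why the statement is restricted to $d\geq 4$.
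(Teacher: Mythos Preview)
Your proposal is correct and follows essentially the same two-step route as the paper: first the conformal-geometry reduction eliminating $\Gamma^c_{ab},P_{ab}$ together with $D_aD_b\xi^c,D_aD_b\lambda$ and their symmetrized derivatives as contractible pairs, then rewriting the almost-Einstein constraints on the reduced bundle to obtain \eqref{einst-svyazi-2} and hence the presentation \eqref{constraints-mb}, \eqref{gran-ish1}, \eqref{gran-ish2}. The only minor imprecision is that in Step~2 the higher $\Omega$-jets are eliminated by \emph{solving} the constraints \eqref{einst-svyazi-2} for them on the already on-shell sub-gPDE rather than as $Q$-contractible pairs, but this does not affect the validity of the argument.
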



\section{Boundary systems and asymptotic symmetries}\label{sec:bound+asympt}

\subsection{Asymptotically simple GR as a gPDE with boundaries}
\label{sec:ggg}

Having obtained a description of gravity in the bulk as a gPDE one can immediately construct the induced gPDE $(i^{*}E,Q, T[1]\cJ)$ on the boundary $\cJ$.  
More specifically, we start with the gPDE defined in the Proposition~\bref{prop:bulk-red}, which encodes the conformal-like on-shell GR in the bulk. A slight but important modification is that fibers of $E$ over the boundary are extended by their own boundary by allowing $\Omega$ to take value $0$ (recall that in the bulk $\Omega >0$). 
In what follows we restrict ourselves to the local analysis and hence do not discuss global geometry of the space-time and its boundary. More specifically, we assume the boundary to have the topology of $S^{d-2}\times \fR$.

Now we identify a gPDE with boundaries which describes asymptotically simple GR. To this end we impose additional conditions which implements Penrose's definition of asymptotically simple spacetime in the gPDE terms. More specifically, we take $E_B \subset i^*E$ to be a sub gPDE of $i^*E$ determined by
\begin{align}\label{gran-usl}
    \Omega=0\,,\qquad Q\Omega=0\,. \qquad D_a\Omega \neq 0\,.
\end{align}
This gives a gPDE with boundaries $(E,Q,T[1]X,E_B,T[1]\cJ)$ which we refer to as \textit{asymptotically simple GR}. Here we keep using $Q$ to denote the homological vector field on $i^*E$ as well as on $E_B$ as these are restrictions of the initial $Q$ on $E$ to the respective submanifolds. Analogous systems for asymptotically-simple spacetimes are obtained by not imposing the Einstein equations. 

It is important to stress that if $D_a\Omega$ were nonvanishing  everywhere in $i^*E$, functions $\Omega$ and $Q\Omega$ would be independent on $i^*E$ so that setting $\Omega=0$ and $Q\Omega=0$ can be understood as an equivalent reduction. However, $D_a\Omega \neq 0$ is  imposed at $\Omega=0$ only so that it is better to regard~\eqref{gran-usl} as the boundary conditions determining asymptotically simple GR. In any case,~\eqref{gran-usl} effectively implements only minor restrictions on the moduli of solutions, which can be thought of as partial gauge conditions.  Another remark is that, as we discussed in Section~\bref{sec:gpde-wb}, the total space $E$ can be extended to a manifold with corners by allowing $\Omega \geq 0$ everywhere. From this perspective $E_B$ can be identified with the respective corner provided one also excludes points where $D_a\Omega \neq 0$. As an (overcomplete) coordinate system on $i^*E$ we use coordinates on $E$ restricted to $i^*E$ seen as a submanifold in $E$. In particular, $\Omega$ in \eqref{gran-usl} is, strictly speaking, a restriction of the initial coordinate $\Omega$ to $i^*E$.

Taking into account constraints~\eqref{gran-usl} in $(i^{*}E,Q,T[1]\cJ)$ results in the boundary gPDE $(E_B,Q,T[1]\cJ)$. The overcomplete set of fiber coordinates can be obtained by restricting the coordinates from Proposition~\bref{prop:bulk-red} to $E_B$ and is given by 
\begin{equation}   
\{g_{bc}, n^b,\rho, \xi^b, C_{b}{}^c, \lambda, \lambda^b,  \We^{b}{}_{cde;(a)}, |a|\geq 0 \}\,.
\end{equation}
The action of $Q$ on some of the coordinates is easily obtained by restricting~\eqref{gran-ish1}, \eqref{gran-ish2}:
\begin{align}
    \begin{split}
        &Q g_{bc}=C_{b}{}^a g_{ac}+C_{c}{}^a g_{b a}+2\lambda g_{bc},\quad Q\xi^b=\xi^a C_{a}{}^{b},\\
        &Q n^b=-\xi^b\rho-n^a C_{a}{}^b-\lambda n^b,\quad Q\lambda^b=C^{b}{}_{a} \lambda^a+\dfrac{1}{2}\xi^a \xi^c \Co^{b}{}_{ac},\\
        &Q\rho=-\lambda\rho-\lambda_a n^a,\quad    Q\lambda=\xi^a \lambda_a,\\
        &Q C_{b}{}^c=C_{b}{}^a C_{a}{}^c+\lambda_b\xi^c-\lambda^c \xi_b+\delta_{b}^c \lambda_a \xi^a +\dfrac{1}{2}\xi^a\xi^d \We^{c}{}_{b a d}\,.
    \end{split}
\end{align}
At the same time constraints \eqref{constraints-mb} take the form:
\begin{align}\label{gran-svyazi}
    \begin{split}       
        &g_{ab} n^a n^b=-\frac{2}{(d-1)(d-2)}\Lambda,\\
        &\left(\nabla_{a_1}\cdots \nabla_{a_n}(\We^{b_3}{}_{cb_1 b_2}n^c-\Co^{b_3}{}_{b_1b_2}\Omega)\right)\big|_{\Omega=0}=0, \quad n\geq 0,\,,\\
                 \end{split}
\end{align}
and, finally, the last constraint is given by $\xi^a g_{ab}n^b=0$ and originates from $Q\Omega=0$.
\begin{definition}\label{def:gran-opr}
The above gPDE $(E_B,Q,T[1]\cJ)$ is refereed to as the boundary gPDE  for asymptotically simple GR.
\end{definition}

\ruth{} As we will see later it is very convenient to work in terms of the minimal model of the system~\bref{def:gran-opr}. However, the respective minimal gPDE crucially depends on the value of the cosmological constant. As we are mostly interested in the null-infinity we assume $\Lambda=0$ unless otherwise specified.

\subsection{Minimal model for the boundary gPDE of asymptotically simple GR}
\label{sec:min-model-boundary}

We now take $\Lambda=0$ and find a minimal model of the boundary gPDE  obtained in the previous Section. We have the following:
\begin{prop}\label{prop:gran-utv-Lambda0}
    In the case $\Lambda=0$ and $g_{ab}$ of Lorentz signature,  gPDE $(E_B,Q,T[1]\cJ)$ defined in \bref{def:gran-opr} is equivalent to its subbundle determined by the following conditions:
      \begin{align}
&g_{ab} =
\begin{pmatrix}
0 & 1 & 0\\
1 & 0 & 0\\
0 & 0 & -\delta_{AB}
\end{pmatrix},\quad
n^a=
\begin{pmatrix}
0 \\
1 \\
0 
\end{pmatrix},
\\[10pt]
&C_{a}{}^b =
\begin{pmatrix}
-\lambda & 0 & C_A\\
0 & -\lambda & 0\\
0 & -C_A & \rho_{A}{}^B-\lambda \delta_{A}^B
\end{pmatrix},\\[10pt]
&\rho=0,\quad \xi^\Omega=0,\lambda^\Omega=0,
\end{align}
where we used the adapted partition of indexes $\{a\}=\{\Omega,u,A\}$, $A=1,\ldots,d-2$ and introduced the following new coordinates: $\rho_{AB}\equiv C_{[AB]}$, $C_A\equiv C_{\Omega\, A}$. Among the constraints~\eqref{gran-svyazi} on the degree-zero variables there only remain:
\begin{align}\label{Weyl-lambda0}
    \nabla_{a_1}\cdots\nabla_{a_n}\We_{b_3 u b_1 b_2}-\sum_{i=1}^{n}g_{u a_i}\nabla_{a_1}\cdots\hat{\nabla}_{a_i}\cdots \nabla_{a_n}\Co_{b_3 b_1 b_2}=0, \quad n\geq0.
\end{align}
Note that $g_{u a_i}=\delta_{a_i\Omega}$ on the subbundle. 
\end{prop}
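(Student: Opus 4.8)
The plan is to realize the subbundle in the statement as an \emph{equivalent reduction} of $(E_B,Q,T[1]\cJ)$ in the sense of Definition~\ref{equivred}: one exhibits a chain of contractible pairs whose elimination sets $g_{ab}=\eta_{ab}$, $n^a=\delta^a_u$, $\rho=0$, $\lambda^\Omega=0$ and brings $C_a{}^b$ to the displayed form, and then checks that the constraints~\eqref{gran-svyazi} specialize to~\eqref{Weyl-lambda0}. Throughout I would use the practical criterion for identifying equivalent reductions: if fibre functions $w^\alpha$ are such that $w^\alpha$ together with $Qw^\alpha$ can be completed to a coordinate system, then the locus $\{w^\alpha=0,\ Qw^\alpha=0\}$ is an equivalent reduction. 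Here $\eta_{ab}$ denotes the constant matrix written in the statement.

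First I would fix the metric sector. Since $n^a\neq0$ (forced by $D_a\Omega\neq0$ in~\eqref{gran-usl}) and $g_{ab}n^an^b=0$ (the first line of~\eqref{gran-svyazi} with $\Lambda=0$), and $g_{ab}$ is invertible of Lorentz signature, one may choose a null basis adapted to $n^a$ in which $\eta_{ab}$ and $n^a=\delta^a_u$ are admissible values. From $Qg_{ab}=2C_{(ab)}+2\lambda g_{ab}$ one sees that $C_{(ab)}$ can be traded for $Qg_{ab}$ as a fibre coordinate, and after this substitution $Qn^b=-\xi^b\rho-n^aC_a{}^b-\lambda n^b$ reads $-\xi^b\rho-n^am_a{}^b$ with $m_{ab}:=C_{[ab]}$. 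Since $n_bQn^b=-\rho\,(\xi^an_a)$ vanishes by the last constraint of~\eqref{gran-svyazi}, the functions $(Qg_{ab},Qn^b)$ have precisely the rank required so that, together with the free components of $(g_{ab}-\eta_{ab},\,n^a-\delta^a_u)$, they form contractible pairs whose elimination removes $C_{(ab)}$ and the $n$-contracted components of $m_{ab}$, which in the adapted basis are $m_{u\Omega}$ and $m_{uA}$. On the reduced bundle $g_{ab}=\eta_{ab}$ and $n^a=\delta^a_u$, the constraint $g_{ab}n^an^b=0$ becomes vacuous, and the constraint $\xi^an_a=0$ (coming from $Q\Omega=0$) becomes $\xi^\Omega=0$ because $n_a=g_{ua}=\delta_{a\Omega}$. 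Writing $C_a{}^b=-\lambda\,\delta_a^b+m_a{}^b$ with $m_a{}^b=g^{bc}C_{[ac]}$ in the adapted basis and renaming the survivors $C_A\equiv C_{\Omega A}$, $\rho_{AB}\equiv C_{[AB]}$ then reproduces the displayed matrix for $C_a{}^b$.

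Next I would eliminate $\rho$ and specialize the constraints. On the reduced bundle $Q\rho=-\lambda\rho-\lambda_an^a=-\lambda\rho-\lambda^\Omega$ (using $\lambda^\Omega=g^{\Omega a}\lambda_a=\lambda_u$), so $(\rho,Q\rho)$ is a contractible pair and its elimination imposes $\rho=0$ and $\lambda^\Omega=0$ at the same time; this exhausts the conditions in the statement, the surviving fibre coordinates being $\lambda,\xi^u,\xi^A,\lambda^u,\lambda^A,C_A,\rho_{AB}$ and the Weyl tower $\We^b{}_{cde;(a)}$, on which $Q$ only raises $|a|$ and produces no further contractible pairs, so the reduction is minimal in the metric sector. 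For the constraints~\eqref{gran-svyazi}, the $Q$-structure gives $\nabla_a\Omega=g_{ab}n^b=n_a$ and $\nabla_an^b=-\delta_a^b\rho$, hence $\nabla_a\nabla_b\Omega=-g_{ab}\rho$; on the reduced bundle $\rho=0$, so $\nabla_{a_1}\cdots\nabla_{a_k}\Omega|_{\Omega=0}=0$ for all $k\geq2$ and $\nabla$ passes freely through the constant $n^c=\delta^c_u$. Expanding $\nabla_{a_1}\cdots\nabla_{a_n}(\We^{b_3}{}_{cb_1b_2}n^c-\Co^{b_3}{}_{b_1b_2}\Omega)$ by the Leibniz rule, keeping the order of the $\nabla$'s, and restricting to $\Omega=0$, the only surviving contributions are the term with all derivatives on $\We^{b_3}{}_{cb_1b_2}$, which after contracting $n^c=\delta^c_u$ gives $\nabla_{a_1}\cdots\nabla_{a_n}\We_{b_3 u b_1 b_2}$, and, from the Cotton term, the terms with exactly one derivative on $\Omega$, each carrying a factor $\nabla_{a_i}\Omega=n_{a_i}=g_{u a_i}=\delta_{a_i\Omega}$; this is precisely~\eqref{Weyl-lambda0}.

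The step I expect to be the main obstacle is the first reduction: one must check carefully, with the two constraints $g_{ab}n^an^b=0$ and $\xi^an_a=0$ in force, that the $Q$-images of the chosen $w^\alpha$ genuinely complete to a coordinate system, so that exactly the claimed components are eliminated and exactly $C_A$, $\rho_{AB}$ (besides $\lambda$) survive to parametrize $C_a{}^b$. This is essentially index bookkeeping in the adapted null basis, but it is where the Lorentz signature and the nondegeneracy of $n^a$ are used. The specialization of the Weyl constraints is conceptually the most delicate step, yet it follows at once from the identities $\nabla_an^b=-\delta_a^b\rho$ and $\nabla_a\nabla_b\Omega=-g_{ab}\rho$ on the reduced bundle.
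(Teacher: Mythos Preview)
Your proposal is correct and follows the same contractible-pair strategy as the paper. The only real difference is the order of elimination: the paper fixes $n^a$ first (using $Qn^b$ to set $n^a=(0,1,0)$ and pin down $C_u{}^b$), then successively the components $g_{ua}$, $g_{\Omega\Omega}$, $g_{\Omega A}$, $g_{AB}$, interleaving $\rho$; you instead kill all of $g_{ab}$ at once via $Qg_{ab}=2C_{(ab)}+2\lambda g_{ab}$, then $n^a$, then $\rho$. Both orderings arrive at the same reduced bundle. Your derivation of how the Weyl constraints~\eqref{gran-svyazi} specialize to~\eqref{Weyl-lambda0} through $\nabla_an^b=-\delta^b_a\rho$ and $\nabla_a\nabla_b\Omega=-g_{ab}\rho$ is actually more explicit than what the paper's proof provides. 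The one place to be a little more careful is the phrasing ``choose a null basis adapted to $n^a$'' before $n^a$ has been fixed---the paper's ordering sidesteps this slight awkwardness by pinning down $n^a$ first---but your approach is valid provided one treats $(g,n)=(\eta,\delta_u)$ as a fixed target on the constraint surface $g_{ab}n^an^b=0$ and verifies the rank condition there, which you correctly flag as the main bookkeeping step.
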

We denote the minimal model introduced in the above Proposition by $(E_B^{\text{min}},Q,T[1]\cJ)$. This gPDE is explicitly defined as a sub-gPDE of $(E_B,Q,T[1]\cJ)$ which, in its turn, is a sub-gPDE of $i^*E$.
\begin{proof}
As usual, the proof is based on the identification of contractible pairs. Using 
\begin{align}
    Q n^b=-\xi^b \rho -n^a C_{a}{}^b-\lambda n^b
\end{align}
and taking into account $n^b\neq 0$ one can set:
\begin{align}
    \begin{split}
        n^\Omega=0,\quad C_{u}{}^\Omega=-\xi^\Omega\rho,\qquad
        n^u=1,\quad C_{u}{}^u=-\xi^u\rho -\lambda,\qquad
        n^A=0,\quad C_{u}{}^A=-\xi^A\rho\,,
    \end{split}
\end{align}
which also gives $g_{uu}=0$ thanks to the first constraint in~\eqref{gran-svyazi}. Using then 
\begin{align}
    Qg_{ua}=-\xi^\Omega\rho g_{\Omega a}-\xi^u\rho g_{ua}-\xi^B\rho g_{Ba}+C_{a}{}^b g_{ub}+\lambda g_{ua}\,,
\end{align}
we can eliminate  $g_{ua}$ as well as $C_{a}{}^b g_{ub}$. Note that $g_{ub}\neq0$ because of $det(g_{ab})\neq0$. More precisely, we set 
\begin{align}
    \begin{split}
        &g_{u\Omega}=1,\quad C_{\Omega}{}^{\Omega}=\xi^u\rho - \lambda,\\
        &g_{uA}=0,\quad C_{A}{}^{\Omega}=\xi^\Omega\rho g_{\Omega A}+\xi^B\rho g_{BA}.
    \end{split}
\end{align}
The second constraint in \eqref{gran-svyazi} then gives $\xi^\Omega=0$. Using 
\begin{align}
    Q\rho=-\lambda\rho-\lambda^\Omega,
\end{align}
allows us to set $\rho=0$ and $\lambda^\Omega=0$. Furthermore, using 
\begin{align}
    Qg_{\Omega\Omega}=2C_{\Omega}{}^u+2C_{\Omega}{}^B g_{B\Omega}+2\lambda g_{\Omega\Omega}
\end{align}
we can set $g_{\Omega\Omega}=0$ and $C_{\Omega}{}^u=-C_{\Omega}{}^B g_{B\Omega}$. Similarly,
\begin{align}
    Qg_{\Omega A}=\lambda g_{\Omega A}+ C_{\Omega}{}^B g_{B A}+C_{A}{}^u
\end{align}
allows us to set $g_{\Omega A}=0$ and $C_{A}{}^u=-C_{\Omega}{}^B g_{BA}$. Finally,
eliminating the remaining components $g_{AB}$ of the metric we set
\begin{align}
    g_{AB}=-\delta_{AB}, \quad C_{(AB)}=\lambda \delta_{AB}.
\end{align}
\end{proof}

To summarize, we have explicitly found a minimal model of the boundary gPDE for asymptotically simple GR. Its overcomplete fiber coordinates are $\{\xi^u,C^A,\xi^A, \rho_{A}{}^B,\lambda,\lambda^u,\lambda^A,  \We^{b}{}_{cde;(a)},|a|\geq0 \}$. The action of $Q$ on the degree $1$-coordinates $\xi^A,\lambda,\lambda_A,\rho_A{}^B$ is given by:
\begin{align}\label{asymptotic ghosts1}
    \begin{split}
            &Q\xi^A=\xi^B\rho_{B}{}^A-\xi^A\lambda,\\
            &Q\rho_{A}{}^{B}=\rho_{A}{}^C \rho_{C}{}^B +\lambda_A\xi^B-\lambda^B \xi_A +\frac{1}{2}\xi^C \xi^D \We^B{}_{A CD},\\
            &Q\lambda=\xi^A\lambda_A,\\
            &Q\lambda^A=\rho^{A}{}_{B}\lambda^{B}-\lambda\lambda^{A}+\dfrac{1}{2}\xi^C\xi^D \Co^{A}{}_{CD}+\xi^{u}\xi^{D}\Co^{A}{}_{uD}.
    \end{split}
\end{align}
Setting the curvatures (these enter the right hand sides multiplied by $\xi^a\xi^b$) to zero gives the Chevalley-Eilenberg differential of the $so(d-1,1)$ subalgebra of $iso(d-1,1)$ algebra. This subalgebra can be identified with the conformal algebra of a $d-2$-dimensional flat space. Moreover, setting to zero only the components $\Co^{A}{}_{uD}$ gives the respective sector of the minimal model of the conformal geometry in $d-2$ dimensions. Note, however, that the entire system differs form that of conformal geometry. In particular, extra curvatures are present and the action of $Q$ on the curvatures is different.

The action of $Q$ on the remaining degree-1 coordinates reads as:
\begin{align}\label{asymptotic ghosts2}
    \begin{split}
        &Q\xi^u=-\xi^u\lambda-\xi^A C_A,\\
        &Q C^A=C^B \rho_{B}{}^A+\lambda^u \xi^A-\lambda^A\xi^u+\dfrac{1}{2}\xi^C\xi^D \We^A{}_{\Omega CD},\\
        &Q\lambda^u=C^A \lambda_A-\lambda \lambda^u +\dfrac{1}{2}\xi^C \xi^D \Co_{\Omega CD}+\xi^{u}\xi^{D} \Co_{\Omega u D}.
    \end{split}
\end{align}
With all the curvatures set to zero, the actions of $Q$ is that of the Chevalley-Eilenberg differential of $iso(d-1,1)$, where \eqref{asymptotic ghosts1} corresponds to $so(d-1,1)$ while \eqref{asymptotic ghosts2} to the $iso(d-1,1)$ translations.

Fields parameterizing solutions of the above minimal model are the $iso(d-1,1)$ connection on the boundary along with the bunch of the degree zero fields (curvatures) some of which are expressed in term of the connection through the equations of motion (and generally impose some differential equations on the connection) while the remaining ones are independent fields. It is of course natural that the boundary system can be formulated in terms of the Poincar\`e connection because the minimal model of the bulk gravity has an analogous formulation. Similar, but not identical formulations of the asymptotically simple GR were considered in~\cite{Nguyen:2020hot},\cite{Herfray:2021qmp}. More details on the field theory encoded in the above minimal model are given in Section~\bref{sec:eom-min}.

In the next two sections, in order to agree with the standard conventions for connections and curvatures, we redefine all fiber coordinates $\varphi$ such that $\gh\varphi=1$ as $\varphi\rightarrow -\varphi$. In particular, this affects the explicit  formulas for the action of $Q$ on fiber coordinates (an alternative way is to reverse the sign at the vertical part of $Q$).

\subsection{Boundary conditions and BMS symmetries}\label{sec:Bound+BMS}

Now we plan to identify a proper counterpart of the BMS boundary conditions in this setup. Strictly speaking the minimal model constructed in the previous section is too ``minimal'' to incorporate a sub-gPDE of boundary conditions as a regular submanifold.  Nevertheless it is not difficult to identify, generally non-regular, constraints which do the job, giving a rather concise description of the boundary conditions and asymptotic symmetries.

As explained in Section~\bref{sec:as-sym-gPDE} in this setup we are forced to allow $Y$ to depend on jets of sections of $E^{min}_B$ (recall that we treat the gPDE of boundary conditions as a subbundle in $E^{min}_B$). Here we use $D^\theta_{\mu}$ to denote the total derivative in $\theta^\mu$ direction (it's a total derivative in the super-jet bundle of $E^{min}_B$ and should not be confused with the total derivative in the initial jet-bundle from which $E^{min}_B$ has been constructed). For instance, if $\sigma$ is a section and $\sigma^*(\lambda_A)=\lambda_{A\mu}(x)\theta^\mu$ then $\sigma^*(D^\theta_{\mu} \lambda_A)=\lambda_{A\mu}(x)$. Note that $\sigma^*(D^\theta_{\mu}D^\theta_{\nu} \lambda_A)=0$ by the degree reasoning.

We define $E_\cJ\subset E_B^{min}$ as a zero locus of the constraints defined on $E^{min}_B$. We first introduce constraints which set the frame field encoded in $\xi^a$ to be a fixed frame: 
\begin{equation}
\label{constr-1}
\xi^A-e^A\sim 0, \qquad \xi^u-\theta^u \sim 0\,,
\end{equation}
where we use adapted coordinates $y^\alpha$ and $u$ on the boundary $\cJ$ and assumed for simplicity that $e^A=e^{A}{}_\alpha(u,y) \theta^\alpha$. 
It is easy to see that $Q$ is not tangent to the surface and hence extra boundary conditions are necessary. Consider the following extra constraints:
\begin{equation}
\label{constr}
\lambda \sim 0\,, \quad \lambda_A\xi^A \sim 0\,, \quad C_A\xi^A \sim 0\,, \quad \dJ e^A+\xi^B\rho_B{}^A  \sim 0\,,    
\end{equation}
where the last three ones coincide with $Q\lambda$, $Q(\xi^u-\theta^u)$, and $Q(\xi^A-e^A)$ modulo terms proportional to $\lambda$.  This can be easily seen using \eqref{asymptotic ghosts1} and \eqref{asymptotic ghosts2} as well as the following representation:
\begin{equation}
C_A\xi^A=Q\xi^u-\xi^u\lambda\,, \qquad 
\rho^A{}_B\xi^B=Q\xi^A-\xi^A\lambda\,.
\end{equation}

Constraints~\eqref{constr} and \eqref{constr-1} define an ideal $\cI_\cJ$ in the algebra of functions on $E_B^{min}$ introduced in Proposition~\bref{prop:gran-utv-Lambda0}. It is easy to see that $Q$ is well defined on the quotient as $\cI_\cJ$ is $Q$-invariant. Because some of the constraints are quadratic, the quotient is not an algebra of functions on a regular subbundle. However, we can still think of it as determining a $Q$-subbundle $E_\cJ$ which is defined in the algebraic sense only.  This does not really lead to problems  because its prolongation to jets of supersections is a regular submanifold, provided we restrict ourselves to sections such that the corresponding frame field is invertible. In this sense working in terms of $E_\cJ$ only gives an economical framework to analyse asymptotic symmetries. All the steps can be repeated in terms of its jet-prolongation which is a genuine subbundle of the jet-bundle. Disregarding the above subtlety, constraints~\eqref{constr} and \eqref{constr-1} define a gauge PDE with boundary in the sense of Definition~\bref{def:BgPDE}. Indeed, $E_\cJ$ is a sub-gPDE of the $E_B^{min}$ which, in turn, is defined as a sub-gPDE of $i^*E$.

Now we are ready to study gauge symmetries that preserve the gPDE of boundary conditions. Consider a gauge parameter vector field
\begin{equation}
\label{Yparam}
Y=\epsilon^u\dl{\xi^u}+\epsilon^A\dl{\xi^A}+\bar \lambda \dl{\lambda}+ \bar \lambda_A \dl{\lambda_A}+\bar\rho^{AB}\dl{\rho^{AB}}+\bar C^{A}\dl{C^{A}}+\bar\lambda{}^u\dl{\lambda^u}\,,
\end{equation}
where $\epsilon^u,\epsilon^A,\bar\lambda, \ldots$  are functions in $x$ while $\bar\lambda_A,\bar C^A,\bar\rho^{AB}$ are also allowed to depend on the $\theta$-jets of $\lambda_A,C^A,\rho^{AB}$. Note that the component $\bar\lambda{}^u\dl{\lambda^u}$ clearly preserves the constraints and hence correspond to trivial asymptotic symmetries. 

We are interested in $Y$ such that the respective symmetry transformations preserves the ideal and hence induces a symmetry transformation that take solutions of $E_\cJ$ to solutions. We have:
\begin{equation}
\label{cond-asym}
    \dJ \sigma^*(Yf)+\sigma^*(YQf)=0 \quad \forall f \in \cI
\end{equation}
This should hold for all section of the gPDE of boundary condition, i.e. sections of $E_B^{min}$ such that $\sigma^*(\text{``constraints''})=0$. 

Taking $f=\lambda$ gives 
\begin{equation}
\label{cons-lambda}
\sigma^*(\dJ \bar\lambda-\epsilon^A \lambda_A+\xi^A \bar \lambda_A)=0
\end{equation}
This implies $\d_u \bar\lambda=0$ because we assumed $\theta^u$ unconstrained and because $\sigma^*(\lambda_A)=\lambda_{AB}(x) e^B$ for some $\lambda_{AB}(x)=\lambda_{BA}(x)$ thanks to $\sigma^*(\xi^A\lambda_A)=0$. Furthermore, \eqref{cons-lambda} also implies:
\begin{equation}
   e^\alpha{}_{B}\d_\alpha \bar\lambda - \epsilon^A \lambda_{AB}+\sigma^*(\bar\lambda_B) =0\,.
\end{equation}
This can be solved for $\bar\lambda_B$ by e.g. $\bar\lambda_B=-e^\alpha{}_{B}\d_\alpha \bar\lambda +e^\alpha{}_{B} \epsilon^A  D^{\theta}_\alpha \lambda_{A}$. Indeed, $\sigma^*(D^{\theta}_\alpha \lambda_{A})=\lambda_{AB}(x)e^{B}{}_\alpha$ and hence imposes no restrictions on $\d_\alpha \bar\lambda$.

Taking $f=\xi^u-\theta^u$ in \eqref{cond-asym} one finds
\begin{equation}
\label{barcdef}
\dJ \epsilon^u+\sigma^*(\epsilon^u \lambda -\xi^u \bar\lambda+\epsilon^A C_A -\xi^A \bar C_A)=0
\end{equation}
This implies $\epsilon^u=u\bar\lambda(y) +T(y)$, with $T$ unconstrained. Moreover, the remaining equation can be satisfied by taking $\bar C_A=e^\alpha{}_{A}(\d_\alpha \epsilon^u + \epsilon^B D^\theta_\alpha C_B)$.

Taking $f=\xi^A-e^A$ one gets
\begin{equation}
\label{xi-check}
\dJ \epsilon^A+\sigma^*(-\epsilon^B\rho_{B}{}^{A}+\xi^B\bar\rho_{B}{}^{A}+\epsilon^{B}\lambda-\xi^B\bar\lambda)=0
\end{equation}
Thanks to $\sigma^*(\dJ e^A+\rho^A{}_B\xi^B)=0$
one finds that $\sigma^*(\rho^A{}_B)=\omega^{A}{}_{B\mu }\theta^\mu$, where $\omega^{A}{}_{B\alpha}(u,y)$ can be expressed in terms of $e^A=e^{A}{}_{\alpha}(u,y)\theta^\alpha$ through standard formulas for Levi-Civita connection and $\omega_{B}{}^{A}{}_{u}=\sigma^{*}(e^{\alpha}{}_{B}\d_{u}e^{A}{}_{\alpha})$. Then, in terms of $\epsilon^{\alpha}\equiv e^{\alpha}{}_{A}\epsilon^{A}$, the equation ~\eqref{xi-check} implies
\begin{equation}\label{xi-check2}
\d_u\epsilon^{\alpha}=0,\quad \d_\alpha\epsilon_\beta-\Gamma^{\gamma}_{\alpha\beta}(e)\epsilon_\gamma+e^{A}{}_{\alpha} e^{B}{}_{\beta}\sigma^*(\bar\rho_{AB})-g_{\alpha \beta}\bar\lambda=0.
\end{equation}
Here, similarly to the standard formulas, $g_{\alpha\beta}\equiv e_{A\alpha}e^{A}{}_{\beta}$ and $\Gamma^{\gamma}{}_{\alpha\beta}\equiv e^{\gamma}{}_{A}(\d_\alpha e^{A}{}_{\beta}-e^{B}{}_{\beta}\omega_{B}{}^{A}{}_\alpha)$. The antisymmetric part of the second equation in \eqref{xi-check2} can be solved for $\bar\rho_{AB}$, and the symmetric part is nothing but a conformal Killing equation. In particular this fixes $\bar\lambda$ in terms of $\epsilon^A$.

Finally,  there remains to check~\eqref{cond-asym} for the last three constraints from~\eqref{constr}.
However, these three are all of the form $Qg$, modulo terms proportional to $\lambda$, with $g$ being $\lambda$ or $\xi^u-\theta^u$ or $\xi^A-e^A$. It follows \eqref{cond-asym} always holds because
\begin{equation}
    \dJ \sigma^*(YQg)+\sigma^*(YQQg)=\dJ \sigma^*(YQg)=-\dJ (\dJ \sigma^*(Yg))=0\,,
\end{equation}
where in the last equality we made use of \eqref{cond-asym}, with $f$ replaced by $g$, and the fact that $Y$ was chosen in such a way that \eqref{cond-asym} holds for $f$ being $\lambda$ or $\xi^u-\theta^u$ or $\xi^A-e^A$.

In this way we are left with $Y$ parameterized by $u$-independent $T$ and $\epsilon^\alpha$. Interpreting $\epsilon^u(u,y),\epsilon^\alpha(y)$ as components of a vector field on the boundary it is easy to check that this is precisely BMS vector field on the boundary, which encodes conformal isometries of $d-2$-dimensional space and supertranslations. More specifically, the BMS vector field on $E_\cJ$ reads as
\begin{equation}
    \epsilon^{BMS}=(u\bar\lambda+ T(y))\dl{u}+\epsilon^{\alpha}(y)\dl{y^\alpha}\,,
\end{equation}
where we use adapted coordinates $u,y^\alpha$ on $\cJ$ and where $T(y)$ is a generic function in $y^\alpha$, $\epsilon^\alpha(y)$ are components of a conformal Killing vector in $d-2$ dimensions, and $\bar\lambda$ is determined by~\eqref{xi-check2}. This is precisely how the infinitesimal BMS transformations act as symmetries of the conformal Carrollian geometry, see e.g.~\cite{Duval:2014uva} for more details.

To make sure we are dealing with nontrivial asymptotic symmetries one should, strictly speaking, show that these symmetries are not equivalent to trivial. I.e. that $\commut{Q}{Y}|_{E_\cJ}$ can not be represented as $\commut{Q|_{E_\cJ}}{Y^\prime}$ for some vertical vector field $Y^\prime$ on ${E_\cJ}$. Considering $Y^\prime$ as a representative of an equivalence class of vertical vector fields tangent to $E_\cJ$ modulo those vanishing on $E_\cJ$ one can assume that $Y^\prime \lambda=Y^\prime \xi^A=Y^\prime \xi^u=0$. Repeating the analysis of this section for such $Y^\prime$ one concludes that $\sigma^*(\bar \lambda_A)=\sigma^*(\bar C_A)=\sigma^*(\bar \rho^A_B)=0$.  Considering, for instance, $\sigma^*(\commut{Q}{Y^\prime} C_A)$ one finds that $(\delta_{Y^\prime}\sigma)^*(C_A)= \sigma^*(e^A\bar\lambda^u)$. Then introducing components $C_{AB}$ as $\sigma^*(C_{A})=e^B C_{BA}(x)$,  the transformation takes the form:
\begin{equation}
\delta_{Y^\prime} C_{AB}=\eta_{AB}\bar\lambda{}^u
\end{equation}
so that it cannot affect the trace-free components of coordinates $(C_{AB})$.  As will be shown in the next section, these components parameterize the asymptotic shear. At the same time transformations with nontrivial $\epsilon^u,\epsilon^\alpha$ do affect the asymptotic shear. 

\subsection{Field-theoretical interpretation of the minimal model}
\label{sec:eom-min}

As we have seen the minimal model $(E^{min}_B,Q,T[1]\cJ)$ of the boundary gPDE for asymptotically simple GR, defined in Section~\bref{sec:min-model-boundary}, plays a crucial role in our approach to asymptotic symmetries. In this section we study its solutions and gauge symmetries and explain how the BMS symmetries can be derived in these terms.

We now study the space of solutions, i.e. sections of $(E^{min}_B,Q,T[1]\cJ)$ satisfying $\dJ \circ\sigma^{*}=\sigma^{*}\circ Q$. By some abuse of notation we introduce the following parameterization of sections: 
\begin{align}
        \sigma^{*} \rho_{A}{}^{B}=\omega_{A}{}^{B},\quad \sigma^{*}\xi^A=e^{A},\quad
        \sigma^{*}\xi^u=l\,,
\end{align}
where all the new functions are linear in $\theta^\mu$, i.e. can be seen as 1-forms on $X$, by the degree reasoning.  For the remaining fiber coordinates we take $\sigma^{*}\phi=\phi(x,\theta)$. The equations of motion in the sector of degree-1 fiber coordinates read as:
\begin{equation}\label{asimpt-ghostsform}
    \begin{gathered}
         \dJ e^A+\omega^{A}{}_B e^B+\lambda e^A=0,\qquad \dJ\lambda+e^A\lambda_A=0,\qquad \dJ l+\lambda l-e^A C_A=0\,,\\
                     \dJ\omega_{A}{}^{B}+\omega_{A}{}^C \omega_{C}{}^B +\lambda_Ae^B-\lambda^B e_A =\frac{1}{2}e^C e^D \We_{A}{}^{B}{}_{CD},\\            
            \dJ\lambda^A+\omega^{A}{}_{B}\lambda^{B}-\lambda\lambda^{A}=-le^D \Co^{A}{}_{uD}-\dfrac{1}{2}e^C e^D \Co^{A}{}_{CD},\\
        \dJ C^A+ \omega^{A}{}_B C^B+\lambda^{u} e^A-\lambda^A l=\dfrac{1}{2}e^Ce^D \We_{\Omega}{}^{A}{}_{CD},\\
        \dJ\lambda^{u}+C_A\lambda^A-\lambda \lambda^{u} =-l e^{D} \Co_{\Omega uD}-\dfrac{1}{2}e^C e^D \Co_{\Omega CD}\,.
    \end{gathered}
\end{equation}
These are Cartan structure equations for the $iso(1,d-1)$ connection written in the special basis, where the $so(1,d-1)$ subalgebra is made explicit as the conformal algebra in $d-2$-dimensions and in contrast to the usual Cartan description of Riemannian or Einstein geometry these equations are defined in $d-1$-dimensional space rather than $d$-dimensional one. Moreover, the curvatures appearing in the right hand sides of the above equations are subject to specific constraints. For instance, components of the curvature in the sector of varibales $e^A,l$ and $\lambda$ vanish. 
    
 In the case of $d=4$ the curvature of this connection contains 5 independent components, namely  $\Co_{\Omega cd}$ and $\Co_{Bcd}$ (other components vanish in $d=4$), which can be identified with Newman-Penrose coefficients $\Psi_{4}, \Psi_{3}, Im\Psi_{2}$ encoding the gravitational radiation. At the same time, fields $\We_{A \Omega \Omega B}$, $\We_{\Omega u A \Omega}$, and $\We_{\Omega u u \Omega}$
 also contain 5 independent components which correspond to the remaining Newman-Penrose coefficients $\Psi_0, \Psi_1$ and $Re(\Psi_2)$, see e.g.~\cite{Newman:1961qr}. Let us stress that in contrast to the former, the latter 5 components do not enter the Cartan structure equations and hence can not be interpreted as components of the curvature of the $iso(1,d-1)$-connection on the boundary. These are known to capture the longitudinal information and indeed are not described by the curvature~\cite{Ashtekar:1981hw}, see also~\cite{Herfray:2020rvq,Herfray:2021qmp} for more details.\footnote{They are analogous to the components of subleading modes appearing in the near-boundary analysis of critical fields in the AdS/CFT context, see e.g.~\cite{Fefferman:2007rka,Skenderis:2002wp}. For generic fields, these modes were described in~\cite{Bekaert:2012vt,Bekaert:2013zya} within a version of gPDE approach.}

Let us introduce the components of the dual frame according to 
$e^A=\sigma^*(\xi^A)=e^{A}{}_{\mu}\theta^\mu$ and $l=\sigma^*(\xi^u)=l_{\mu}\theta^\mu$ and restrict to sections with invertible frame. Components $(e^{\mu}{}_{A},n^\mu)$ of the frame are introduced via
\begin{align}
    n^\mu l_\mu=1, \quad n^{\mu}e^{A}{}_{\mu}=0,\quad l_{\mu}e^{\mu}{}_{A}=0,\quad e^{A}{}_\mu e^{\mu}{}_{B}=\delta^{A}_{B}.
\end{align}
Taking into account the constraints on the curvature one can check that as independent components of the connection one can take $\{l_\mu, e^{A}{}_{\mu},e^{\nu}{}_{A}\lambda_\nu, C_{(AB)}\}$ because the remaining components can be expressed through them.

The parameterization of the space of solutions to \eqref{asimpt-ghostsform} can be described more efficiently if one makes use of the gauge freedom \eqref{predv-gaugetransf}. Introducing gauge parameter vector field $Y$ as in~\eqref{Yparam} and assuming coefficients to depend on $x$ only the gauge transformation for $\lambda_\mu$ reads as
\begin{align}\label{g-fix-lambdaold2}
    \lambda_\mu\rightarrow\lambda_\mu +\partial_\mu\bar{\lambda}-\epsilon^{A}\lambda_{A\mu}+e^{A}{}_{\mu}\bar{\lambda}_A\,,
\end{align}
so that the following gauge condition can be imposed:
\begin{align}\label{gauge-fix-1}
    e^{\nu}{}_{A}\lambda_\nu=0\,.
\end{align}
In this gauge the components of the gauge parameter satisfy $\bar{\lambda}_A=-e^{\mu}{}_{A}(\partial_\mu\bar{\lambda}-\epsilon^B\lambda_{B\mu})$. In a similar way, we can achieve $C^{A}{}_{A}=0$, leading to further relations between gauge parameters:
\begin{align}
      \bar{\lambda}^{u}=\dfrac{1}{d-2}e^{\mu}{}_{A}(\partial_\mu \bar{C}^A-\bar{C}^B\omega_{B}{}^{A}{}_{\mu}+ C^{B}{}_{\mu}\bar\rho_{B}{}^{A}+\lambda^{u}{}_{\mu}\epsilon^A-\lambda^{A}{}_{\mu}\epsilon^{u}).
 \end{align}
Furthermore, using
\begin{align}
            \delta l_\mu=\partial_\mu \epsilon^{u}+\epsilon^u \lambda_\mu-l\bar{\lambda}+\epsilon^A C_{A\mu}-e^{A}{}_{\mu}\bar{C}_A\,,
\end{align}
the following gauge can be reached $l_\mu=\partial_\mu u$, where $u$ is a function of $x^\mu$ satisfying $n^{\mu}\partial_\mu u=1$. Function $u$ is often employed in the literature on BMS symmetries and it is convenient to take it as one of the coordinate functions $\{x^\mu\}\rightarrow \{u,y^{\alpha}\}$, $\alpha=1,\ldots,d-2$. Let us also list the constraints on gauge parameters, which ensure preservation of $l_\mu=\partial_\mu u$:
\begin{align}
    \bar{C}_{B}=e^{\mu}{}_{B}(\partial_\mu\epsilon^{u}+\epsilon^{u}\lambda_{\mu}+\epsilon^{A}C_{A\mu}),
    \qquad
\bar{\lambda}=n^\mu(\partial_\mu\epsilon^{u}+\epsilon^{u}\lambda_\mu+\epsilon^{A}C_{A\mu})\,.
\end{align}

To summarize: by imposing gauge condition as explained above one can parameterize the connection in terms of algebraically independent components $\{e^{A}{}_{\mu}, C_{(AB)}|_{tf}\}$ (of course there can be nontrivial differential constraints following from the constraints on the curvature). In so doing ${e^{A}}{}_{\mu}$ encodes the degenerate metric $g_{\mu\nu}\equiv e^{A}{}_{\mu}g_{AB} e^{B}{}_{\nu}$ whose kernel is generated by $n=\frac{\partial}{\partial u}$, while $-\frac{1}{2}C_{(AB)|_{tf}}$ is the so-called asymptotic shear, see e.g.\cite{Ashtekar:2014zsa}, which  parameterize torsion-free and metric-compatible affine connections on the boundary. Recall that such a connection is not unique if metric is degenerate. The geometry determined by $g_{\mu\nu}$ and $n^\mu$ defined up to an overall Weyl-like rescalings is often refereed to as conformal Carroll geometry.

The setup of this section gives an alternative framework to study asymptotic symmetries, which in contrast to the more algebraic approach of Section~\bref{sec:Bound+BMS}, is somewhat analogous to the standard analysis, see e.g.~\cite{Nguyen:2020hot}, \ruth{} where the first-order formalism is also employed. Let us sketch how asymptotic symmetries  can be found in this framework. First of all one imposes boundary conditions on sections of $E_B^{min}$ and then, in order to simplify the system, one imposes partial gauge conditions, e.g. the one discussed above. In the next step one studies gauge transformations that preserve this boundary conditions. For instance, to arrive at BMS symmetries in the present framework it is enough to fix a concrete frame $e^A=e^{A}{}_\mu(x)\theta^\mu$
and set $\lambda=0$. BMS symmetries are then obtained as the residual symmetries preserving these boundary conditions. Note that these boundary conditions correspond to only a subset of the conditions~\eqref{constr-1} and~\eqref{constr} of Section~\bref{sec:Bound+BMS}. The remaining conditions correspond to solving some of the equations of motion and imposing partial gauge conditions, cf. Remark~\bref{rem:reduction}.


\subsection{Asymptotically (A)dS spaces}
In the above analysis we concentrated on asymptotically flat spacetimes. It turns out that the boundary system $(E_B,Q,T[1]\cJ)$ defined in~\bref{def:gran-opr} also works in the case of asymptotically (A)dS spacetimes\footnote{In our analysis $\Lambda>0$ for asymptotically AdS spacetimes or $\Lambda<0$ for asymptotically dS spacetimes, since we work in the signature $(+,-,\dots, -)$. See e.g.~\cite{Kroon:2016ink}}. In this case the metric induced on the boundary is nondegenerate and hence the respective minimal model differs substantially from the case of $\Lambda=0$. More precisely, we have
\begin{prop}
    For $\Lambda\neq0$ Gauge PDE \eqref{def:gran-opr} is equivalent to its sub-gPDE defined as follows:
    \begin{align}
    \begin{split}
    &g_{ab} =
\begin{pmatrix}
-\tilde{\Lambda} & 0 \\
0 & \eta^{\varepsilon}_{AB} 
\end{pmatrix},\quad
n^a=
\begin{pmatrix}
1 \\
0  \\
\end{pmatrix},\quad
C_{a}{}^{b}=
\begin{pmatrix}
-\lambda & 0 \\
0 & \rho_{A}{}^{B}-\lambda\delta_{A}^{B},
\end{pmatrix},\\
&\rho=0,\quad \xi^\Omega=0,\quad \lambda^\Omega=0,
    \end{split}
    \end{align}
where the adapted partition of indexes, e.g. $\{a\}=\{\Omega,A\}$, $A=0,\ldots,d-1$ has been employed and 
\begin{align}
    \eta^{\varepsilon}_{AB}\equiv (\varepsilon,-1,\dots,-1),\quad \rho_{AB}\equiv C_{[AB]},\quad \tilde{\Lambda}\equiv\frac{2}{(d-1)(d-2)}\Lambda,\quad \varepsilon\equiv\sign \Lambda\,.    
\end{align}
Among the constraints on the degree-zero variables
\eqref{gran-svyazi} there only remain:\begin{align}\label{gran-Weyl-lambda}
    \nabla_{a_1}\cdots\nabla_{a_n}\We_{b_3\Omega b_1 b_2}-\sum_{i=1}^{n}g_{\Omega a_i}\nabla_{a_1}\cdots\hat{\nabla}_{a_i}\cdots \nabla_{a_n}\Co_{b_3 b_1 b_2}=0, \quad n\geq0,
\end{align}
where the hatted symbols are assumed omitted and $g_{\Omega a_i}=-\tilde{\Lambda}\delta_{\Omega a_i}$.
\end{prop}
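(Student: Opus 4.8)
The plan is to establish the asserted equivalence exactly as in the proof of Proposition~\bref{prop:gran-utv-Lambda0}: one exhibits the stated sub-gPDE as an equivalent reduction of the boundary gPDE $(E_B,Q,T[1]\cJ)$, in the sense of Definition~\bref{equivred}, by locating a chain of contractible pairs among its overcomplete fiber coordinates and eliminating them one at a time. The single structural novelty compared with the flat case is that the first constraint in~\eqref{gran-svyazi} now reads $g_{ab}n^an^b=-\tilde\Lambda\neq0$, so $n^a$ is \emph{non-null}; accordingly I would choose the adapted splitting $\{a\}=\{\Omega,A\}$ with $n^a$ pointing along the $\Omega$-direction with fixed nonzero norm, in place of the lightlike adapted frame used for $\Lambda=0$. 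This is precisely what keeps the induced boundary metric nondegenerate, in contrast with the degenerate (Carrollian) metric of the $\Lambda=0$ case.

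Concretely, I would perform the eliminations in the following order. From $Qn^b=-\xi^b\rho-n^aC_a{}^b-\lambda n^b$ and $n^b\neq0$, the pairs $(n^b,Qn^b)$ allow setting $n^\Omega=1$, $n^A=0$ and solving for $C_\Omega{}^b$ (in particular $C_\Omega{}^\Omega=-\lambda-\xi^\Omega\rho$, $C_\Omega{}^A=-\xi^A\rho$), while $g_{ab}n^an^b=-\tilde\Lambda$ fixes $g_{\Omega\Omega}=-\tilde\Lambda$. Then $Qg_{\Omega A}$ provides a contractible pair letting one set $g_{\Omega A}=0$ and solve $C_A{}^\Omega$; with the mixed metric components removed, the constraint $\xi^ag_{ab}n^b=0$ descending from $Q\Omega=0$ reduces to $\xi^\Omega=0$. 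Next $Q\rho=-\lambda\rho-\lambda_an^a$ is a contractible pair (its second half being $\lambda^\Omega$), so one puts $\rho=0$, $\lambda^\Omega=0$, after which $C_\Omega{}^A=C_A{}^\Omega=0$ and $C_\Omega{}^\Omega=-\lambda$. Finally $Qg_{AB}=C_A{}^Cg_{CB}+C_B{}^Cg_{CA}+2\lambda g_{AB}$ lets one eliminate $g_{AB}$ against the symmetric part of $C_A{}^B$, setting $g_{AB}=\eta^\varepsilon_{AB}$ and hence $C_{(A}{}^{B)}=-\lambda\delta^B_A$; retaining $\rho_{AB}\equiv C_{[AB]}$ then gives $C_A{}^B=\rho_A{}^B-\lambda\delta^B_A$. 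This uses up exactly the coordinates $n^a$, $\rho$, $\xi^\Omega$, $\lambda^\Omega$ and the non-$AB$ components of $g_{ab}$ and $C_a{}^b$, leaving the advertised coordinate list; that $Q$ is tangent to the resulting subbundle is automatic, since the subbundle is cut out precisely by functions whose $Q$-images were used as the second halves of contractible pairs.

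It then remains to rewrite the surviving curvature constraints of~\eqref{gran-svyazi}. Substituting $n^c=\delta^c_\Omega$ together with $\nabla_a\Omega=g_{ab}n^b=g_{\Omega a}=-\tilde\Lambda\delta^\Omega_a$ — the latter read off from $Q\Omega=\xi^a\nabla_a\Omega+\dots$ — the derivatives that hit $\Omega$ generate exactly the $g_{\Omega a_i}$-terms, and the whole tower collapses to~\eqref{gran-Weyl-lambda} with no extra conditions on the retained fields. I expect the main obstacle to be, as in the $\Lambda=0$ case, the bookkeeping needed to order the eliminations so that at each step the function removed and its $Q$-image remain functionally independent and stay independent after restriction to the fibers; one should also, as in Proposition~\bref{prop:bulk-red}, keep track of the dimensional hypotheses, since the bulk pre-minimal reduction on which $E_B$ rests was carried out for $d\geq4$ and small $d$ would need separate treatment. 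The infinitely many curvature constraints add some combinatorial weight, but once the frame and metric are frozen they decouple from the elimination procedure and simply reduce to~\eqref{gran-Weyl-lambda}.
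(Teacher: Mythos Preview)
Your proposal is correct and follows the same approach as the paper, which simply states that the proof is fully analogous to that of Proposition~\bref{prop:gran-utv-Lambda0}. You have correctly identified that the only structural change is the non-null character of $n^a$ forced by $g_{ab}n^an^b=-\tilde\Lambda\neq0$, and your chain of contractible-pair eliminations (fixing $n^a$ via $Qn^b$, then $g_{\Omega A}$ via $Qg_{\Omega A}$, then $\xi^\Omega$ from the constraint $\xi^ag_{ab}n^b=0$, then $(\rho,\lambda^\Omega)$ via $Q\rho$, and finally $g_{AB}$ via $Qg_{AB}$) reproduces exactly the reduction that the paper leaves implicit.
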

The proof is fully analogous to that of~\bref{prop:gran-utv-Lambda0}. As an overcomplete coordinate system on the above sub-gPDE we can take the restrictions of : $\{\xi^A,\rho_{A}{}^B, \lambda,\lambda^A, \We^{m}{}_{nkp;(a)}, |a|\geq 0\}$. In these coordinates the action of $Q$  on the degree $1$ coordinates reads as:
\begin{align}\label{gran-ghostslambdaneq0}
    \begin{split}
        &Q\xi^A=\xi^B \rho_{B}{}^{A}-\xi^A\lambda,\\
        &Q \rho_{A}{}^B=\rho_{A}{}^C \rho_{C}{}^B+\lambda_A\xi^B-\lambda^B \xi_A +\dfrac{1}{2}\xi^C\xi^D \We^{B}{}_{ACD},\\
        &Q\lambda=\xi^A\lambda_A,\\
        &Q\lambda^A=\rho^{A}{}_C\lambda^C-\lambda\lambda^A+\dfrac{1}{2}\xi^B \xi^C \Co^{A}{}_{BC}\,.
    \end{split}
\end{align}
It is easy to see that this coincides with the definition of CE differential of $o(d,1)$ for dS and $o(d-1, 2)$ for AdS, 
written in the conformal-like basis.  This of course signals that in the case at hand the boundary is naturally equipped with the conformal structure. More precisely, solutions to the above sub-gPDE in the sector of degree $1$ coordinates define a Cartan connection of the respective conformal geometry. However, the gauge theory encoded in this sub-gPDE  is not generally equivalent to conformal geometry. For instance, in the case of $d=5$ the respective conformal geometry is Bach-flat. The Bach flatness condition is encoded in the equations on curvatures arising in the sector of degree $0$ variables. This is the realization in our approach of the well-known Fefferman-Graham analysis~\cite{Fefferman-Graham:1985ambient,Fefferman:2007rka} (see also~\cite{Bekaert:2012vt,Bekaert:2013zya,Chekmenev:2015kzf,Bekaert:2017bpy} for the analogous considerations for generic gauge fields within a version of gPDE framework).

As for asymptotic symmetries, one can consider an analogous boundary condition $\sigma^*(\xi^A)=e^A$, where $e^A$ is a fixed frame on the boundary. The analysis of Section~\bref{sec:eom-min} can be easily repeated in the case at hand, giving the conformal Killings  of $g_{\mu\nu}= e_\mu^A e_\nu^B g_{AB}$ as the basis in the algebra of asymptotic symmetries. Of course, one can equally well repeat the analysis of Section~\bref{sec:Bound+BMS} in which case together with $\xi^A-e^A\sim 0$ and $Q(\xi^A-e^A)\sim 0$ one should also impose additional boundary conditions $\lambda\sim 0$ and $Q\lambda\sim 0$. 

\section*{Acknowledgments}
\label{sec:Aknowledgements}
We wish to thank I.~Dneprov and Th.~Popelensky for fruitful discussions. M.G. is also grateful to G.~Barnich, X.~Bekaert, M.~Henneaux, and J.~Herfray for useful exchanges.
The work of M.~M. was supported by the Russian Science Foundation grant No 22-72-10122 (\url{https://rscf.ru/en/project/22-72-10122/}).
Part of this work was done when M.~G. participated in the thematic program "Emergent Geometries from Strings and Quantum Fields" at the Galileo Galilei Institute for Theoretical Physics, Florence, Italy.

\appendix
\section{Symmetries}
Let us restrict ourselves to local analysis. At least locally, a gPDE can be equivalently represented as the nonlagrangian local BV system so that it is enough to give  a proof in this setup. In this case $E$ is a $J^\infty(\cE) \to X$ pulled back to $T[1]X$. In particular, functions on $E$ can be identified with horizontal forms on~$J^\infty(\cE)$. $E$ is equipped with the evolutionary homological vector filed $s$ of ghost degree $1$ and the homological vector field $\dh=\theta^aD_a$ of $\theta$-homogeneity $1$. We have the following:
\begin{prop}
Locally, cohomology of $\commut{\dh}{\cdot}$ in the space of vertical vector fields is trivial in positive $\theta$-homogeneity (space-time form degree). In the space of vertical vector fields of vanishing $\theta$-homogeneity, it is given by the evolutionary vector fields on $E$.
\end{prop}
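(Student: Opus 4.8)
The plan is to compute the cohomology of $\commut{\dh}{\cdot}$ acting on vertical vector fields on $E=J^\infty(\cE)$ (pulled back to $T[1]X$) by expanding a vertical vector field in the fiber jet coordinates and their $\theta$-derivatives, and then reducing to a standard contracting-homotopy argument. Recall $\dh=\theta^a D_a$ where $D_a$ is the total derivative, so $[\dh,\cdot]$ raises $\theta$-homogeneity by one. A general vertical vector field can be written as $V=\sum V^{A}_{(a)}\,\frac{\partial}{\partial \phi^A_{(a)}}$, where $\phi^A$ are the fiber coordinates of $\cE$, $(a)$ is a symmetric multi-index, $\phi^A_{(a)}=D_{(a)}\phi^A$ are the jet coordinates, and the coefficients $V^{A}_{(a)}$ are horizontal forms (polynomials in $\theta^a$ with coefficients that are local functions on $J^\infty(\cE)$). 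Because $[\dh,\partial/\partial\phi^A_{(a)}]$ can be expressed through the $\partial/\partial\phi^A_{(b)}$ with $|b|=|a|+1$ together with a $\theta$-factor, the operator $[\dh,\cdot]$ acts, at fixed ``base point'' in $J^\infty(\cE)$, essentially as a Koszul-type differential on the coefficients organized by jet order.

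First I would set up the bigrading: one grading by $\theta$-homogeneity $p$ (the form degree) and one by total jet order. The key computational input is the identity $[\dh,\partial/\partial\phi^A_{(a)}] = -\sum_{b}\theta^b\,\partial/\partial\phi^A_{(ab)} + (\text{terms of lower jet order acting trivially})$, i.e. up to reorganization the differential is, on the space of coefficient-tuples $\{V^A_{(a)}\}$, the de Rham-type differential on the ``symmetric algebra in $\theta$'' tensored with the jet tower. More precisely, the relevant complex is, for each $A$ and each total symmetric slot, the Koszul complex of the polynomial algebra $\RR[\theta^1,\dots,\theta^n]$ (here $\theta$'s are odd, so it is the analogue of the Koszul/de Rham complex), which is acyclic in positive degree and one-dimensional in degree zero. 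This is exactly the local acyclicity of the horizontal differential familiar from the jet-bundle BV literature (the algebraic Poincaré lemma for $\dh$), transported from functions/horizontal forms to vertical vector fields by the module structure. I would therefore invoke the standard contracting homotopy for $\dh$ on horizontal forms (see e.g. the references in the excerpt, or the algebraic Poincaré lemma) and check that it lifts to vertical vector fields because the action of $\dh$ on $\partial/\partial\phi^A_{(a)}$ is compatible with the jet-shift.

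For the degree-zero statement: a vertical vector field $V$ with $[\dh,V]=0$ and $\theta$-homogeneity zero has coefficients $V^A_{(a)}$ that are local functions (no $\theta$'s). The cocycle condition then forces, inductively in jet order, $V^A_{(a)}=D_{(a)}V^A_{()} $ — that is, $V$ is the prolongation of its ``value'' $V^A\equiv V^A_{()}$, hence an evolutionary vector field; and conversely every evolutionary (prolongation) vector field commutes with $\dh$ by construction $[\dh, \mathrm{pr}\,V]=0$. So the degree-zero cohomology is precisely the evolutionary vector fields, with no coboundaries since there is nothing in negative $\theta$-degree. The hard part will be bookkeeping: making the reorganization $[\dh,\partial/\partial\phi^A_{(a)}]=\dots$ precise (the symmetrization of multi-indices and the combinatorial coefficients), and checking carefully that the contracting homotopy for $\dh$ on horizontal forms genuinely restricts to vertical vector fields rather than producing spurious horizontal or base-direction components. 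Once that reduction is in place the acyclicity in positive form degree and the identification in degree zero are immediate from the algebraic Poincaré lemma.
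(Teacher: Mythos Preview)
Your overall strategy---expand in jet coordinates, recognise a Koszul-type structure, invoke a contracting homotopy---is in the same spirit as the paper's proof, and your degree-zero argument (cocycle forces $V^A_{(a)}=D_{(a)}V^A$, hence evolutionary) is correct. However, your key formula has the jet shift going the wrong way. A direct computation from $D_c=\partial_c+\sum_{(b)}\phi^B_{(b)c}\,\partial/\partial\phi^B_{(b)}$ gives that $\commut{\dh}{\partial/\partial\phi^A_{(a)}}$ is a combination of $\partial/\partial\phi^A_{(a')}$ with $|a'|=|a|-1$, not $|a|+1$; in particular $\commut{\dh}{\partial/\partial\phi^A}=0$, contradicting your claimed $-\theta^b\partial/\partial\phi^A_b$. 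Equivalently, the coefficient of $\partial/\partial\phi^i_{(c)}$ in $\commut{\dh}{V}$ involves $V^i_{(c)}$ together with $V^i_{(c)b}$ one jet level \emph{higher}, exactly as the paper's formula $\commut{\dh}{V}\phi^i=\theta^a\theta^b(D_bV^i_a-V^i_{a|b})$ makes explicit.

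This is more than bookkeeping: once the direction is corrected, the homotopy has to be organised from the lowest jet level upward, and your appeal to ``transporting the algebraic Poincar\'e lemma by module structure'' is not automatic---the $\dh$-acyclicity of horizontal forms is a different complex, and the $D_a$ piece mixes nontrivially with the shift. The paper sidesteps this by doing the homotopy by hand: for a degree-one cocycle it uses the freedom in $W^i$ to kill $V^i_a$ via $\commut{\dh}{W}\phi^i=\theta^a(D_aW^i-W^i_a)$, whereupon the cocycle condition leaves only the symmetric part $V^i_{(a|b)}$, which is then killed by choosing $W^i_{ab}$, and so on inductively in jet order. This is precisely the contracting-homotopy argument you have in mind, but carried out explicitly rather than by identifying an abstract Koszul complex; with your formula corrected your route would converge to the same computation.
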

\begin{proof}
Let us work in local coordinates $x^a,\theta^b,\phi^i_{(a)}$ and conisider a $\theta$-homogeneity 1 vector field to begin with: 
\begin{equation}
V=\theta^a V_a^i\dl{\phi^i}+\theta^a V_{a|b}^i\dl{\phi^i_b}+\ldots    \,.
\end{equation}
The cocycle condition reads as $\commut{\dh}{V}=0$ and implies, in particular,
\begin{equation}
\commut{\dh}{V}\phi^i=\theta^a\theta^b(D_b V^i_a-V^i_{a|b})=0
\end{equation}
At the same time the coboundary $\commut{\dh}{W}$  acting $\phi^i$ has the following structure:
\begin{equation}
\commut{\dh}{W}\phi^i=\theta^a (D_a W^i-W^i_a)\,,
\end{equation}
where coefficients $W^i, W^i_a$ are introduced as $W^i=W {\phi^i}$ and $W^i_a=W {\phi^i_a}$. 
It follows, by adding a coboundary one can always set the coefficient  $V^i_a=\dl{\theta^a}(V\phi^i)$ to zero. Then the cocycle condition implies that $V^i_{a|b}-V^i_{b|a}=0$ so that 
$V^i_{b|a}$ can be also set to zero by adding $\commut{\dh}{W}$ such that the only nonvanishing coefficient is $W^i_{ab}=W\phi^i_{(ab)}=-V^i_{(a|b)}$. The proof can be completed by induction.  The analysis for higher $\theta$-homogeneity vector field is analogous.    
\end{proof}

Let us now turn to the cohomology of $\commut{Q}{\cdot}$, $Q=\dh+s$ in the space of vertical vector fields. Expanding the cocycle condition in the $\theta$-homogeneity one gets
\begin{equation}
\begin{gathered}
\commut{s}{V_0}=0\,, \quad \commut{s}{V_1}+\commut{\dh}{V_0}=0\,, \quad \ldots \\ 
\commut{s}{V_1}+\commut{\dh}{V_0}=0\,, \quad \commut{\dh}{V_k}=0\,,
\end{gathered}
\end{equation}
where we assumed that $V_l=0$ for all $l>k$, with $0<k \leq n$. Applying the above Proposition we conclude that $V_l=\commut{\dh}{W_{l-1}}$. Subtracting the trivial cocycle $\commut{Q}{W_{l-1}}$ we arrive at a new representative $V^\prime$ such that $V^\prime_l=0$ for all $l>k-1$. Applying the same procedure again we arrive at an equivalent representative for which $V_l=0$ for $l>0$. The cocycle condition then implies that $\commut{s}{V_0}=0$ and 
$\commut{\dh}{V_0}=0$. In other words we have arrived at the standard representative of a global symmetry. Note that in the above we did not assume $\gh{V}=0$ so that it applies to generalized symmetries as well.

\label{sec:gpdesym}

\setlength{\itemsep}{1pt}
\small

\providecommand{\href}[2]{#2}\begingroup\raggedright\endgroup

\end{document}